\theoremstyle{plain} 
\newtheorem{theorem}{Theorem}
\newtheorem{corollary}{Corollary}
\newtheorem{lemma}{Lemma}
\def\wl{\par \vspace{\baselineskip}}
\theoremstyle{definition}
\theoremstyle{remark}
\newtheorem{assumption}{Assumption}
\newcommand{\GG}{\mathbb{G}}
\newcommand{\ber}{{\sf Ber}}
\newcommand{\unif}{{\sf Unif}}
\newcommand{\nm}{{\sf N}}
\newcommand{\gam}{{\sf Gamma}}
\newcommand{\RR}{\mathbb{R}}
\newcommand{\PP}{\mathbb{P}}
\renewcommand{\L}{\mathscr{L}}
\newcommand{\Nbrack}{N_{[\,]}}
\newcommand{\Jbrack}{J_{[\,]}}
\newcommand{\sign}{\mathrm{sign}}
\newcommand{\veta}{\boldsymbol{\veta}}
\renewcommand{\phi}{\varphi} 
\newcommand{\eps}{\varepsilon}
\title{Gibbs posterior inference on the minimum clinically important difference}
\author{
Nicholas Syring \quad and \quad Ryan Martin \\
Department of Statistics \\
North Carolina State University \\
{\tt (nasyring, rgmarti3)@ncsu.edu} }
\date{\today}
\begin{document}


\maketitle 

\begin{abstract}
It is known that a statistically significant treatment may not be clinically significant.  A quantity that can be used to assess clinical significance is called the minimum clinically important difference (MCID), and inference on the MCID is an important and challenging problem.  Modeling for the purpose of inference on the MCID is non-trivial, and concerns about bias from a misspecified parametric model or inefficiency from a nonparametric model motivate an alternative approach to balance robustness and efficiency.  
In particular, a recently proposed representation of the MCID as the minimizer of a suitable risk function makes it possible to construct a Gibbs posterior distribution for the MCID without specifying a model.  We establish the posterior convergence rate and show, numerically, that an appropriately scaled version of this Gibbs posterior yields interval estimates for the MCID which are both valid and efficient even for relatively small sample sizes.  

\smallskip

\emph{Keywords and phrases:} Clinical significance; loss function; M-estimation; model-free inference; posterior convergence rate.
\end{abstract}

\section{Introduction}
\label{S:intro}

In clinical trials, often the main objective is assessing the efficacy of a treatment.  However, experts have observed that statistical significance alone does not necessarily imply efficacy \citep{jacobson.truax.1991}.  For instance, a study with high power can detect statistically significant differences, but these may not translate to practical differences noticeable by the patients.  As a result, a cutoff value different than a statistical critical value is desired that can separate patients with and without clinically significant responses.  This cutoff is called the {\em minimum clinically important difference}, or {\em MCID} for short \citep{jaescheke1989}.  Accurate inference on the MCID is crucial for clinicians and health policy-makers to make educated judgments about the effectiveness of certain treatments.  Indeed, the U.~S.~Food and Drug Administration held a special workshop in 2012 on methodological developments towards improved inference on the MCID.\footnote{\url{https://federalregister.gov/a/2012-27147}}


	

The basic setup is that, in addition to a scalar diagnostic measure for each patient, which would be used to assess the statistical significance of a treatment, one also has access to a ``patient-reported outcome,'' a binary indicator of whether or not the patient felt that the treatment was beneficial.  Then, roughly, the MCID is defined as the cutoff value such that, if the diagnostic measure exceeds this cutoff, then the patient is likely to observe a benefit from the treatment.  A more precise description of the problem setup is given in Section~\ref{S:mcid}.  The challenge in making inference on the MCID is in modeling the joint distribution for the diagnostic measure and patient-reported outcome.  Given a model, standard likelihood-based methods---Bayesian or non-Bayesian---could be used, but specifying a sound model is difficult because the MCID is a rather complicated functional thereof.  To avoid the potential bias caused by a misspecified parametric model and the inefficiencies that result from an overly-complex nonparametric model, a model-free approach is an attractive alternative.  Recently, \citet{xu.mcid} propose a M-estimation framework for estimating the MCID, that does not require a model, but the distribution theory needed to provide valid tests or confidence intervals for the MCID based on their approach is apparently out of reach.  


In this paper, we show that a Gibbs posterior can provide inference on the MCID without requiring a likelihood, thus avoiding the modeling step and the risk of misspecification while providing easy access to credible intervals, and that our new method compares favorably to the existing M-estimation method in terms of both large-sample theory and finite-sample performance.  Construction of the Gibbs posterior takes advantage, first, of the representation in \citet{xu.mcid} of the MCID as the minimizer of an expected loss and, second, of the recent efforts in \citet{bissiri.holmes.walker.2013} to describe a Bayesian-like analysis with a loss function in place of a likelihood.  Our Gibbs posterior distribution is easy to compute and, with a suitable scaling, is shown to provide valid and efficient credible intervals for the MCID.  

Our focus in this paper is the MCID application, but some general comments about Gibbs posteriors are worth mentioning.  First, our problem is related to that of model misspecification, and it is known \citep[e.g.,][]{bunke.milhaud.1998, lee.macheachern.2011, walker2013, kleijn, deblasi.walker.2013, rvr.sriram.martin} that, asymptotically, the posterior distribution behaves reasonably under misspecification provided that it is Gibbs-like in the sense that the negative log-likelihood used resembles a suitable loss function; a nice example of this type is \citet{sriram.rvr.ghosh.2013}.  Second, although misspecification is usually viewed as a bad thing, there might be reasons to ``misspecify on purpose.''  
For example, one may not wish to spend the resources needed to flesh out a full model, including priors, and to compute the full posterior when, ultimately, it will be marginalized to the parameter of interest.  The Gibbs posterior described here has the advantage of being defined directly on the parameter of interest, simplifying both prior specifications and posterior computations.  

The remainder of the paper is organized as follows.  In Section~\ref{S:mcid} we introduce our notation for the MCID problem and formulate its definition as a minimizer of an expected loss.  This leads naturally to the M-estimator proposed in \citet{xu.mcid} and we improve on their asymptotic convergence rate result in two ways: first, we improve the rate and, second, we clarify the sense in which the rate depends on the local properties of the function defined in \eqref{eq:p.function}.  In Section~\ref{S:bayes}, after a motivating illustration, 
we define our Gibbs posterior distribution for the MCID, and we go on to show that it, and the corresponding posterior mean, converge at the same rate as the M-estimator of \citet{xu.mcid}.  Simulation results are presented in Section~\ref{S:examples}, and the take away message is that our Gibbs posterior, or a suitably scaled version thereof, provides quality inference on MCID, in terms of estimation accuracy and interval coverage and length.  Some concluding remarks are given in Section~\ref{S:discuss}, and technical details are given in the Appendix.

\section{Minimum clinically important difference}
\label{S:mcid}

\subsection{Notation and definitions}

In clinical trials for drugs or medical devices, it is standard to judge the effectiveness of the treatment based on statistical significance.  However, it is possible that the treatment effect may be significantly different from zero in a statistical context, but the effect size is so small that the patients do not experience an improvement.  To avoid the costs associated with bringing to market a treatment that is not clinically effective, it is advantageous to bring the patients' assessment of the treatment effect into the analysis.  While the need for a measure of clinical significance is well-documented \citep[e.g.,][]{kaul.diamond.2010}, it seems there is no universal definition of MCID and, consequently, there is no standard methodology to make inference on it.   Recent efforts in this direction were made by \citet{shiu.gatsonis.2008} and \citet{turner.etal.2010}.  \citet{xu.mcid} provide a mathematically convenient formulation, described next, in which the MCID is expressed as a minimizer of a suitable loss function.  

Let $Y \in \{-1, 1\}$ denote the patient reported outcome with ``$Y=1$'' meaning that the treatment was effective and ``$Y=-1$'' meaning that the treatment was not effective.  Let $X$ be a continuous diagnostic measure taken on each patient.  Let $P$ denote the joint distribution of $(X,Y)$, and $p$ the marginal density of $X$ with respect to Lebesgue measure.  Given $\theta \in \RR$, define the function $\ell_\theta$ by 
\begin{equation}
\label{eq:loss_fcn}
 \ell_\theta(x,y) = \tfrac12\{1 - y \, \sign(x-\theta)\}, \quad (x,y) \in \RR \times \{-1,1\}, 
\end{equation}
where $\sign(0)=1$, and write $R(\theta) = P\ell_\theta$ for the risk function, the expectation of $\ell_\theta$ with respect to the joint distribution $P$.  Then the MCID, denoted by $\theta^\star$, is defined as 
\begin{equation}
\label{eq:mcid}
\theta^\star = \arg \min_\theta R(\theta). 
\end{equation}
That is, the MCID is the minimizer of the risk function $R$, and depends on the distribution $P$ in a rather complicated way.
The intuition behind this definition is the alternative expression for $R(\theta)$:
\[ R(\theta) = P\{Y \neq \sign(X - \theta)\}, \]
i.e., $\theta^\star$ minimizes, over $\theta$, the probability that $\sign(X-\theta)$ disagrees with $Y$.  In other words, $\sign(X-\theta^\star)$ is the best predictor of $Y$ in terms of minimum misclassification probability.  Another representation of the MCID, as demonstrated by \citet{xu.mcid}, that will be convenient below is as a solution to the equation $\eta(\theta) = \frac12$, where 
\begin{equation}
\label{eq:p.function}
\eta(x) = P(Y=1 \mid X=x) 
\end{equation}
is the conditional probability function.  If $\eta$ is continuous and strictly increasing, then $\theta^\star$ will be the unique solution to the equation $\eta(\theta)=\frac12$.  If $\eta$ is only upper semi-continuous, then we may define $\theta^\star$ as $\inf\{x: \eta(x) \geq \frac12\}$, and an argument similar to that in Lemma~1 of \citet{xu.mcid} shows that this $\theta^\star$ solves the optimization problem \eqref{eq:mcid}.



\subsection{M-estimator and its large-sample properties}
\label{SS:minimizer}

\citet{xu.mcid} propose to estimate the MCID by minimizing an empirical risk.  Let $\PP_n = n^{-1}\sum_{i=1}^n \delta_{(X_i,Y_i)}$ be the empirical measure, based on the observations $\{(X_i,Y_i):i=1,\ldots,n\}$, where $\delta_{(x,y)}$ is the point-mass measure at $(x,y)$.  Then the empirical risk is $R_n(\theta) = \PP_n \ell_\theta$, and an M-estimator of MCID is obtained by minimizing $R_n(\theta)$, i.e.,
\begin{equation}
\label{eq:m.est}
\hat\theta_n = \arg\min_\theta R_n(\theta).
\end{equation}
Computation of the estimator is straightforward since it takes only finitely many values depending on the order statistics for the $X$-sample.  Therefore, a simple grid search is guaranteed to quickly identify the minimizer $\hat\theta_n$.  

A shortcoming of this approach is that, due to the discontinuity of the loss function, an asymptotic normality result for the M-estimator does not seem possible; see Section~\ref{SS:smooth}.  Therefore, valid confidence intervals for the MCID based on the M-estimator are not currently available.  This provides motivation for a Bayesian approach, where credible intervals, etc, can be easily obtained, but some non-standard ideas are needed to deal with the fact that $\theta$ is defined by a loss function, not a likelihood; see Section~\ref{S:bayes}.  Bootstrap methods are available (see Section~\ref{S:examples}) but there is a general concern about their validity because the rate is not the usual $n^{-1/2}$.   

Consistency and convergence rates for the M-estimator $\hat\theta_n$ have been studied by \citet{xu.mcid}.  The rates rely on the local behavior of the function $\eta$ and of the marginal distribution of $X$ around $\theta^\star$.  In Theorem~\ref{thm:m.rate}, we clarify and substantially improve upon the rate result given in \citet{xu.mcid}.  Our assumptions here are more efficient than theirs, and we discuss these differences below.  

\begin{assumption}
\label{asp:one}
The marginal density $p$ of $X$ is continuous and bounded away from 0 and $\infty$ on an interval containing $\theta^\star$.
\end{assumption}

\begin{assumption}
\label{asp:two}
The function $\eta$ in \eqref{eq:p.function} is non-decreasing, upper semi-continuous, and satisfies $\eta(\theta) > \eta(\theta^\star$) for all $\theta > \theta^\star$.  Furthermore, there exists constants $c>0$, and $\gamma\geq 0$ such that
\begin{equation}
\label{eq:gamma.new}
\min \left|\eta(\theta^\star \pm \eps) - \eta(\theta^\star)\right| > c\eps^{\gamma}, \quad \text{for all small $\eps > 0$}, 
\end{equation} 
where ``min'' is with respect to the two choices in ``$\pm$.''  
\end{assumption}

We interpret $\gamma$ as an``ease of identification'' index, where smaller $\gamma$ means that the $\eta$ function is, in a certain sense, changing more rapidly near $\theta^\star$, making the MCID easier to identify.  In particular, if $\eta$ has a jump discontinuity at $\theta^\star$, then $\gamma=0$, and this corresponds to the easiest case; if $\eta$ is differentiable at $\theta^\star$, then $\gamma=1$, the most difficult case; and if $\eta$ is continuous but not differentiable at $\theta^\star$, then $\gamma \in (0,1)$, an intermediate case.  For a quick example of the latter case, intermediate ease of identification, fix $\alpha, \beta \in (0,1)$, $\alpha \geq \beta$, and define $\eta(x)$, $x \in [-1,1]$ as 
\[ \eta(x) = \begin{cases} \frac12(1-|x|^{\alpha}), & \text{if $x \in [-1,0)$}, \\ \frac12 (1+x^{\beta}), & \text{if $x \in [0,1]$}. \end{cases}. \]
Clearly, the MCID is $\theta^\star=0$, $\eta$ is continuous but not differentiable there, and \eqref{eq:gamma.new} holds with $\gamma = \alpha$.  Although this ``ease of identification'' index is non-standard, it appears to be the key determinant of the convergence rate.  Indeed, the convergence rate of the M-estimator in Theorem~\ref{thm:m.rate} below improves as $\gamma$ decreases to 0, explaining why we call $\gamma=0$ and $\gamma=1$ the ``easiest'' and the ``most difficult'' cases, respectively.  

\begin{theorem}
\label{thm:m.rate}
Under Assumptions~\ref{asp:one}--\ref{asp:two}, the M-estimator $\hat\theta_n$ in \eqref{eq:m.est} satisfies $\hat\theta_n - \theta^\star = O_P(n^{-r})$ as $n \to \infty$, where $r = (1 + 2\gamma)^{-1}$, and $\gamma$ is defined in \eqref{eq:gamma.new}.   
\end{theorem}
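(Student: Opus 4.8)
The plan is to apply the standard machinery for rates of convergence of M-estimators (e.g., van der Vaart and Wellner, 1996, Theorem~3.2.5), which produces the rate by balancing two ingredients: a lower bound on the \emph{curvature} of the population risk near its minimizer, and an upper bound on the \emph{modulus of continuity} of the centered empirical process. Before localizing, I would first record consistency, $\hat\theta_n \to \theta^\star$ in probability. The class $\{\ell_\theta : \theta \in \RR\}$ is Glivenko--Cantelli, because $\ell_\theta$ depends on $\theta$ only through the indicator $\sign(x-\theta)$ and so forms a uniformly bounded VC class; this gives $\sup_\theta|R_n(\theta)-R(\theta)| \to 0$, and Assumption~\ref{asp:two} makes $\theta^\star$ a well-separated minimizer of $R$, so the usual argmin-consistency argument applies.

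The first main step is the curvature bound. Writing $R(\theta) = \int_\theta^\infty (1-\eta)\,p + \int_{-\infty}^\theta \eta\,p$, differentiation gives $R'(\theta) = p(\theta)\{2\eta(\theta)-1\}$, and since $\eta(\theta^\star)=\tfrac12$,
\[ R(\theta) - R(\theta^\star) = 2\int_{\theta^\star}^{\theta} p(u)\{\eta(u) - \eta(\theta^\star)\}\,du \]
for $\theta > \theta^\star$, with the mirror-image expression for $\theta < \theta^\star$. Using Assumption~\ref{asp:one} to bound $p$ below by some $p_- > 0$ near $\theta^\star$, together with the lower bound in \eqref{eq:gamma.new}, namely $|\eta(\theta^\star\pm\eps)-\eta(\theta^\star)| > c_1\eps^{\gamma_1}$, a direct integration yields $R(\theta) - R(\theta^\star) \gtrsim |\theta-\theta^\star|^{1+\gamma_1}$ for $\theta$ in a neighborhood of $\theta^\star$. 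Thus the effective curvature exponent here is $\alpha = 1+\gamma_1$, rather than the classical quadratic value $2$.

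The second main step is the empirical-process bound. A direct computation shows $\ell_\theta - \ell_{\theta^\star} = y\,\mathbb{1}\{\theta^\star \le x < \theta\}$ for $\theta > \theta^\star$ (and the reflection otherwise), so $|\ell_\theta - \ell_{\theta^\star}|$ is the indicator of the interval between $\theta^\star$ and $\theta$; in particular $P(\ell_\theta - \ell_{\theta^\star})^2 = P(\theta^\star \le X < \theta) \le p_+|\theta-\theta^\star|$, where $p_+ < \infty$ bounds $p$ from above near $\theta^\star$. The localized class $\{\ell_\theta - \ell_{\theta^\star} : |\theta-\theta^\star| < \delta\}$ is VC with an envelope whose $L_2(P)$-norm is of order $\delta^{1/2}$, so a maximal inequality for VC classes gives
\[ \E \sup_{|\theta - \theta^\star| < \delta} |\GG_n(\ell_\theta - \ell_{\theta^\star})| \lesssim \delta^{1/2}, \qquad \GG_n = \sqrt{n}(\PP_n - P), \]
that is, the modulus is $\phi_n(\delta) \asymp \delta^{1/2}$.

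Finally, feeding $\alpha = 1+\gamma_1$ and $\phi_n(\delta) \asymp \delta^{1/2}$ into the rate theorem, the rate $r_n$ solves $r_n^{\alpha}\phi_n(r_n^{-1}) \lesssim \sqrt{n}$, i.e.\ $r_n^{1+\gamma_1} r_n^{-1/2} \asymp n^{1/2}$, which gives $r_n \asymp n^{1/(1+2\gamma_1)} = n^{r}$ and hence $\hat\theta_n - \theta^\star = O_P(n^{-r})$. I expect the main obstacle to be the departure from the textbook setting: the curvature is of the non-quadratic order $|\theta-\theta^\star|^{1+\gamma_1}$, so the general (not the $\alpha=2$) form of the rate theorem is required, and one must verify its structural condition that $\delta \mapsto \phi_n(\delta)/\delta^{q}$ be decreasing for some $q<\alpha$, which holds here with $q=\tfrac12 < 1 \le 1+\gamma_1$. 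The discontinuity of $\sign$ is a second point to watch, but it turns out to be benign: since the loss \emph{difference} is simply an interval indicator, the relevant class is VC and the $\delta^{1/2}$ modulus follows without any smoothness of the loss itself.
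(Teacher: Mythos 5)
Your proposal is correct and takes essentially the same route as the paper: the paper's proof is also the peeling/localization rate argument of van der Vaart's Theorem~5.52, built from the identical two ingredients, the curvature bound $R(\theta)-R(\theta^\star)\gtrsim|\theta-\theta^\star|^{1+\gamma_1}$ (its Lemma~\ref{lem:loss.diff.bnd}) and the modulus bound $\E\bigl\{\sup_{|\theta-\theta^\star|<\delta}|\GG_n(\ell_\theta-\ell_{\theta^\star})|\bigr\}\lesssim\delta^{1/2}$ (its Lemma~\ref{lem:maximal}), balanced to give $r=(1+2\gamma_1)^{-1}$. The only cosmetic differences are that you get the maximal inequality from VC theory while the paper uses bracketing entropy, and you invoke the off-the-shelf rate theorem (with a preliminary consistency step) while the paper carries out the shelling argument by hand in its Lemma~\ref{lem:num}.
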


\begin{proof}
See Appendix~\ref{proofs:minimizer}.  
\end{proof}

Our assumptions are different than those in the M-estimator convergence rate theorem of \citet{xu.mcid}, so some comments are in order.  In particular, they impose a H\"older continuity condition on $\eta$, as well as a ``low noise assumption,'' in Equation~(4) in their paper, which upper-bounds the $P$-probability assigned to events of the form $\{|\eta(X) - \frac12| \leq \xi\}$.  This implicitly requires that $\eta$ not be too flat near $\theta^\star$, just like our condition \eqref{eq:gamma.new}, and together with their H\"older condition, they derive a locally uniform lower bound on the risk difference $R(\theta) - R(\theta^\star)$, similar to the one we derive in Lemma~\ref{lem:loss.diff.bnd} in the Appendix.  However, our approach and more-direct assumptions appear to be more efficient, because we get a better lower bound on $R(\theta) - R(\theta^\star)$ and, consequently, a better convergence rate.  Indeed, in the case where $\eta$ is differentiable at $\theta^\star$, we obtain a rate $n^{-1/3}$ whereas \citet{xu.mcid} obtains $n^{-1/5}$ (up to logarithmic terms).  Similarly, for that example above with powers $\alpha \geq \beta$, we obtain a rate $n^{-r}$, with $r=(1+2\alpha)^{-1}$ whereas \citet{xu.mcid} obtains $n^{-r'}$, with $r' = \{2(1+2\alpha)-\beta/\alpha\}^{-1}$.  So, besides showing how the rate depends critically on the ``ease of identification'' index $\gamma$, these examples also highlight the significant improvements in our rates.

\subsection{On smoothed versions of the problem}
\label{SS:smooth}

It was mentioned above that $R_n(\theta)$ not being smooth causes some problems in terms of limit distribution theory, etc.  It would, therefore, be tempting to replace that non-smooth loss function by something smooth, and hope that the approximation error is negligible.  One idea would be to introduce a nice parametric model for this problem.  For example, consider a binary regression model, where $\eta(x) = F(\beta_0 + \beta_1 x)$ and $F$ is some specified distribution function, such as logistic or normal.  Then the MCID corresponds to the median lethal dose \citep[e.g.,][]{kelly2001, agresti2002}.  Such a model is smooth so asymptotic normality holds.  However, unless the true $P$ has the specified form, there will be non-zero bias that cannot be overcome, even asymptotically; see Section~\ref{SS:motivate}.  Since the bias is unknown, sampling distribution concentration around the wrong point cannot be corrected, so is of little practical value.  

A slightly less extreme smoothing of the problem is to make a minor adjustment to the original loss function $\ell_\theta$.  As in \citet{xu.mcid}, introduce a smoothing parameter $\tau > 0$ and consider 
\[ \ell_\theta^\tau(x,y) = \min\bigl\{1, \bigl[1 - \tau^{-1} \, y \, \sign(x-\theta) \bigr]^+ \bigr\}, \]
where $u^+ = \max(u, 0)$ denotes the positive part.  Write $R^\tau(\theta) = P \ell_\theta^\tau$.  Based on arguments in \citet{xu.mcid}, it can be shown that $R^\tau(\theta)$ converges uniformly to $R(\theta)$ as $\tau \to 0$, so, for small $\tau$, the minimizer of $R^\tau$ would be close to $\theta^\star$.  For fixed $\tau$, one can define $R_n^\tau(\theta) = \PP_n \ell_\theta^\tau$ just as before and consider an M-estimator $\hat\theta_n^\tau = \arg \min_\theta R_n^\tau(\theta)$.  An asymptotic normality result for $\hat\theta_n^\tau$ is available, but the proper centering is not at $\theta^\star$ and the asymptotic variance is inversely proportional to $\tau$.  So, one could take $\tau=\tau_n$ vanishing with $n$ in an effort to remove the bias, but a price must be paid in terms of the variance.  Again, having an asymptotic normality result with either an unknown non-zero bias or a very large variance is of little practical value.

Based on these remarks, apparently there is no hope in trying to smooth out the problem to make it a standard one with the usual asymptotic distribution theory.  So, in order to construct useful interval estimates, etc, one needs some different ideas.  

\section{A Gibbs posterior for MCID}
\label{S:bayes}

\subsection{Motivation}
\label{SS:motivate}

As discussed above, estimation of the MCID can be achieved without specifying a model, but the distribution theory needed to develop valid interval estimates is lacking.  A Bayesian approach automatically provides uncertainty quantification, but it requires a model for the joint distribution $P$.  To motivate our model-free Gibbs posterior development that follows, we demonstrate the apparent sensitivity of some ``standard'' Bayesian posterior distributions---parametric and nonparametric---to the underlying $P$.  To be clear, we do not claim that Bayesian methods, in general, are inappropriate for this MCID problem, only that the posterior can be particularly sensitive to the choice of model for $P$ so a less-sensitive approach, if one were available, would be attractive.  


Suppose we begin our analysis with a model for $P$ given by a joint density/mass function $f_\beta(x,y)$, depending on some parameter $\beta$, possibly infinite-dimensional, which would typically be different from $\theta$.  Given a prior for $\beta$, a posterior distribution for $\beta$ can be readily obtained via Bayes theorem, which can be marginalized to get a posterior distribution for $\theta$.  In particular, logistic regression is a sort of black-box approach to study the relationship between a binary response and a quantitative predictor, so consider a Bernoulli model for $Y$, given $X=x$, where the success probability is $F(\beta_0 + \beta_1 x)$, where $F$ is the standard logistic distribution function.  In this case, the MCID is just the median lethal dose, i.e., $\theta = -\beta_0 / \beta_1$.  The choice of the logit link function $F$ is quite rigid, but a more flexible nonparametric approach is available \citep{choudhuri.2007}.  

There are pros and cons to both of the approaches just described.  Assuming that the logistic regression model is well-specified, inference on the MCID ought to be efficient.  However, if the model is misspecified in some way, then there could be non-negligible bias that cannot be overcome, even asymptotically.  The model that treats the link function nonparametrically is more flexible and, therefore, less prone to bias, but at the cost of an increased computational burden and lower efficiency, i.e., posterior for the MCID is more diffuse.  Old-fashioned modeling would be a middle-ground between the extremes of a black-box logistic regression and an overly complex nonparametric regression, but this certainly requires some investment and, unfortunately, is not foolproof.  Our proposed Gibbs approach is an alternative middle-ground, one that avoids misspecification bias, computational and statistical inefficiency, and modeling investment. 



For clarity, we give an illustration of the points just raised.  In particular, we compare our Gibbs posterior defined in Section~3.2 to both a standard Bayesian logistic regression and a nonparametric binary regression \citep{choudhuri.2007}.  For the Bayesian logistic regression we consider the vague priors for $(\beta_0, \beta_1)$ given in \citet{polson.2013}, but our results do not appear to be sensitive to this choice. 
Let us suppose that the true model generating data $(X,Y)$ has a distribution function $F$ for $X$ and, given $X=x$, $Y$ is Bernoulli $\pm 1$ with success probability $F(x)$.  We will consider two different forms of $F$, both two-component normal mixtures:
\[ X \sim 0.7 \nm(-1,1) + 0.3 \nm(1,1) \quad \text{and} \quad X \sim 0.7 \nm(-1,1) + 0.3 \nm(3, 1). \]
The true MCID may be calculated by solving \eqref{eq:p.function}; it is equal to the median of the $X$ distribution, specifically $\theta^\star = -0.514$ in the first example and $\theta^\star = -0.434$ in the second example.  Of course, the logistic regression model is misspecified, but the nonparametric model should not be affected by this.  But how will they perform in the two examples?  

Plots of the marginal posterior density for $\theta$ are shown in Figure~\ref{fig:logistic} for a simulated data set of size $n=500$ obtained from each of the three methods---Gibbs, logistic regression, and nonparametric---one for each marginal distribution for $X$.  In Panel~(a) we see that the posterior distributions for all three models put their mass near the true MCID.  However, in Panel~(b) we see that the posterior distribution for the Bayesian logistic regression is clearly biased away from the true MCID.  The nonparametric Bayesian posterior is very spread out, making it less informative for inference on the MCID.  Our Gibbs approach, however, is right on the mark in both cases, suggesting that it is neither sensitive to model misspecification nor does it suffer from the inefficiency of the nonparametric approach.  

\begin{figure}
\begin{center}
\subfigure[$X \sim 0.7 \nm(-1,1) + 0.3 \nm(1,1)$]{\scalebox{0.5}{\includegraphics{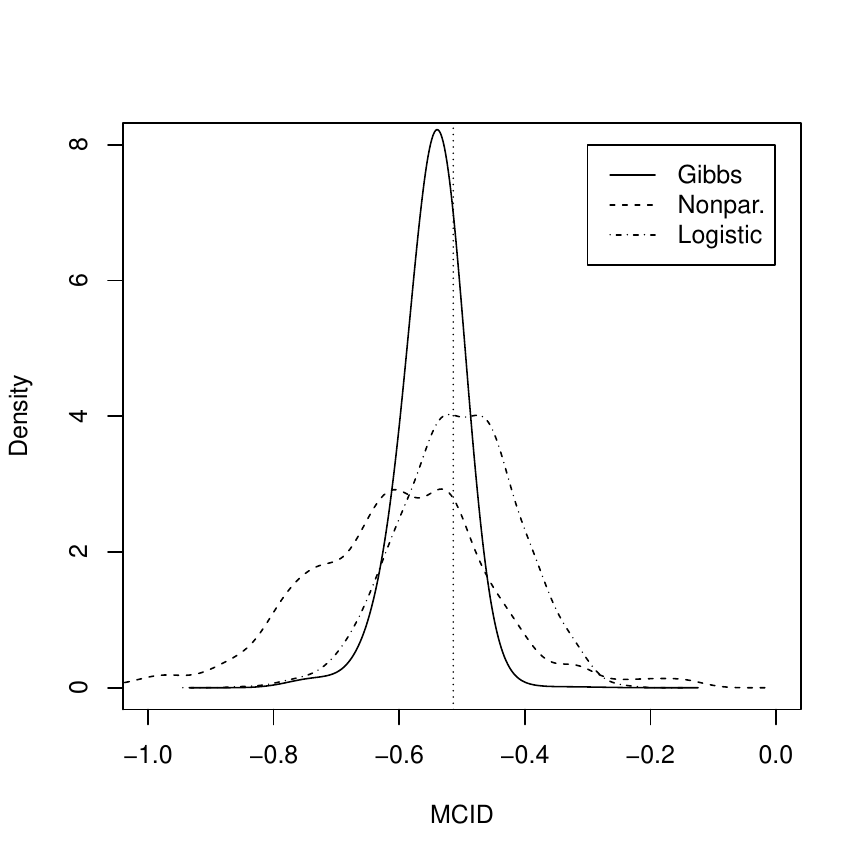}}}
\subfigure[$X \sim 0.7 \nm(-1,1) + 0.3 \nm(3,1)$]{\scalebox{0.5}{\includegraphics{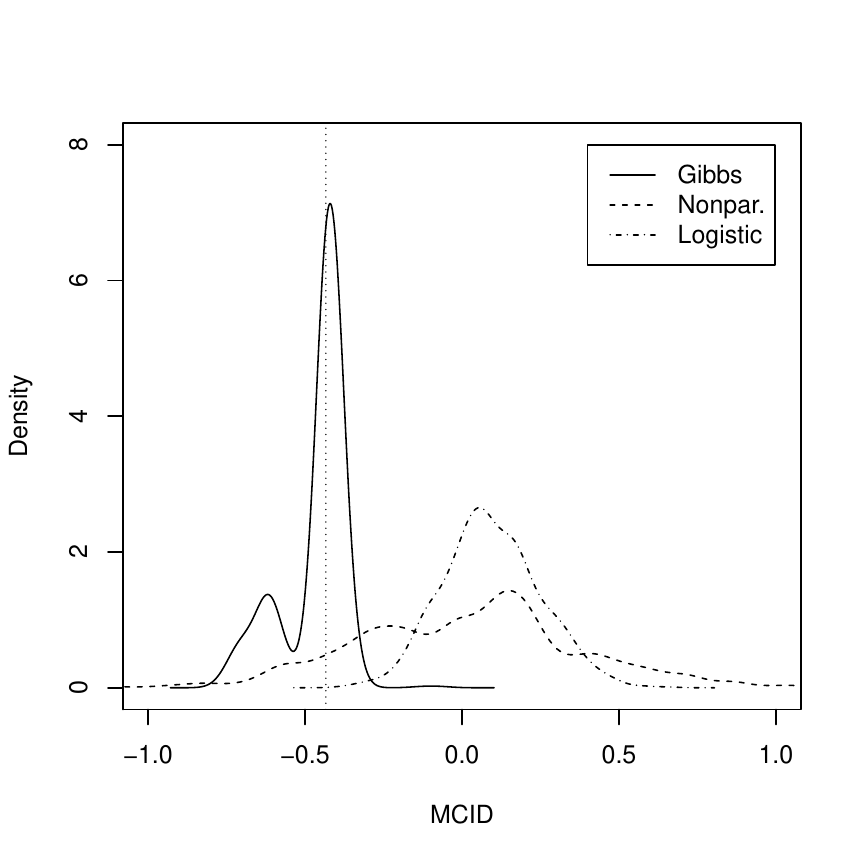}}}
\end{center}
\caption{Plots of (kernel estimates of) the posterior density for MCID.  ``Nonpar." corresponds to the nonparametric binary regression model; ``Logistic'' corresponds to the posterior based on the genuine Bayes logistic model; ``Gibbs'' is the proposed likelihood-free Bayesian posterior; true MCID $\theta^\star$ marked with a dotted vertical line.}
\label{fig:logistic}
\end{figure}

\subsection{Posterior construction}
\label{SS:posterior}

In contrast with a likelihood-based approach, a Gibbs model does not require specification of the probability model $P$.  A Gibbs model consists of a risk function connecting data and parameter, here \eqref{eq:loss_fcn}, and a prior for the parameter.  \citet{bissiri.holmes.walker.2013} consider a Gibbs model that boils down to treating the scaled empirical risk function $nR_n(\theta)$ like a negative log-likelihood and constructing the posterior distribution as usual.  That is, our Gibbs posterior distribution for $\theta$ is given by 
\begin{equation}
\label{eq:post}
\Pi_n(A) = \frac{\int_A e^{-n R_n(\theta)} \,\Pi(d\theta)}{\int_\RR e^{-nR_n(\theta)} \,\Pi(d\theta)}, \quad A \subset \RR, 
\end{equation}
where $R_n(\theta) = \PP_n \ell_\theta$ is the empirical risk defined above, and $\Pi$ is the  prior distribution for $\theta$.  Note that the use of the actual loss function defining the MCID means that we have not introduced any bias.  Moreover, we are only required to do prior specification and posterior computations directly on the $\theta$-space, i.e., there are no additional nuisance parameters that need priors but will ultimately be marginalized away.  

Since the empirical risk function $R_n(\theta)$ is bounded away from zero and infinity, the tails of the posterior match those of the prior.  However, data cannot support a value of $\theta$ outside the range of the $X$ observations so, in practice, we will implicitly restrict the posterior to that range.  This adjustment is not necessary for our theoretical analysis.

\ifthenelse{1=1}{}{
\subsection{Posterior consistency}
\label{SS:consistency}

As a first check that the G posterior is behaving reasonably, we prove an asymptotic consistency theorem.  Roughly, the theorem says that, if $\eta$ in \eqref{eq:p.function} is semi-continuous and strictly increasing, and if the prior density $\pi$ is positive in a neighborhood of $\theta^\star$, then the posterior will, with $P$-probability~1, as $n \to \infty$, put all its mass on arbitrarily small neighborhoods of $\theta^\star$.  Here and in what follows, it will be convenient to rewrite the posterior distribution in \eqref{eq:post} as 
\[ \Pi_n(A) = \frac{N_n(A)}{D_n} = \frac{\int_A e^{-n [L_n(\theta) - L_n(\theta^\star)]} \pi(\theta) \,d\theta}{\int_\RR e^{-n[L_n(\theta) - L_n(\theta^\star)]} \pi(\theta) \,d\theta}. \]
Then the proofs that follow are all based on obtaining appropriate bounds on the numerator, $N_n(A)$, for various subsets $A$, and the denominator, $D_n$.  

\begin{theorem}
\label{thm:consistency}
Let $\eta$ in \eqref{eq:p.function} be semi-continuous and strictly increasing.  If the prior density $\pi$ is positive and continuous in a neighborhood of $\theta^\star$, then $\Pi_n(\{\theta: |\theta-\theta^\star| > \eps\}) \to 0$ $P$-almost surely as $n \to \infty$ for any $\eps > 0$.  
\end{theorem}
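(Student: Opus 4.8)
The plan is to use the standard numerator--denominator decomposition for posterior concentration. Writing $N_n(A) = \int_A e^{-n[R_n(\theta) - R_n(\theta^\star)]}\,\pi(\theta)\,d\theta$ and $D_n = N_n(\RR)$, so that $\Pi_n(A) = N_n(A)/D_n$, and setting $A_\eps = \{\theta : |\theta - \theta^\star| > \eps\}$, the goal is to show $N_n(A_\eps)/D_n \to 0$ almost surely. Two facts drive the argument. The first is a \emph{separation} of $\theta^\star$ from $A_\eps$ at the population level. A direct computation gives, for $\theta > \theta^\star$, $R(\theta) - R(\theta^\star) = \int_{\theta^\star}^{\theta} (2\eta(x) - 1)\,p(x)\,dx$, with the mirror-image expression $\int_{\theta}^{\theta^\star}(1 - 2\eta(x))\,p(x)\,dx$ for $\theta < \theta^\star$. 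Because $\eta$ is strictly increasing and passes through $\tfrac12$ at $\theta^\star$, both integrands are nonnegative and strictly positive where $p > 0$; since $X$ places positive probability in every neighborhood of $\theta^\star$, the map $\theta \mapsto R(\theta) - R(\theta^\star)$ is nondecreasing as $\theta$ moves away from $\theta^\star$, continuous, and strictly positive off $\theta^\star$. Hence the separation constant $b_\eps := \min\{R(\theta^\star + \eps),\,R(\theta^\star - \eps)\} - R(\theta^\star)$ is strictly positive and $R(\theta) - R(\theta^\star) \ge b_\eps$ for all $\theta \in A_\eps$.

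The second fact is a uniform strong law of large numbers, $\alpha_n := \sup_{\theta} |R_n(\theta) - R(\theta)| \to 0$ $P$-almost surely, and this is the one step where I expect real work. I would obtain it from empirical-process theory: each $\ell_\theta$ is the indicator of $\{(x,y) : y\,\sign(x - \theta) = -1\}$, and as $\theta$ ranges over $\RR$ these sets are built from the half-lines $\{x > \theta\}$, which form a Vapnik--Chervonenkis class of index one. The class $\{\ell_\theta : \theta \in \RR\}$ is therefore Glivenko--Cantelli, giving $\alpha_n \to 0$ a.s.\ on a single exceptional-null set that does not depend on $\eps$. The decisive feature is that this uniformity holds over the \emph{unbounded} region $A_\eps$, which is precisely what is needed to control the numerator and is exactly what the VC property supplies.

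Given these two facts the bounds are routine. On $A_\eps$ we have $R_n(\theta) - R_n(\theta^\star) \ge (R(\theta) - R(\theta^\star)) - 2\alpha_n \ge b_\eps - 2\alpha_n$, which exceeds $b_\eps/2$ once $\alpha_n < b_\eps/4$; hence $N_n(A_\eps) \le e^{-n b_\eps/2}$ for all large $n$. For the denominator, fix $\delta > 0$ and keep only the mass on $U_\delta = \{|\theta - \theta^\star| < \delta\}$, where $R_n(\theta) - R_n(\theta^\star) \le \sup_{U_\delta}(R - R(\theta^\star)) + 2\alpha_n$. By continuity of $R$ the supremum can be made less than $\eps'/2$ by shrinking $\delta$, and $2\alpha_n < \eps'/2$ eventually, so $D_n \ge e^{-n\eps'}\,\Pi(U_\delta)$ with $\Pi(U_\delta) > 0$ because $\pi$ is positive and continuous near $\theta^\star$.

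Combining the two bounds gives $\Pi_n(A_\eps) \le \Pi(U_\delta)^{-1}\,e^{-n(b_\eps/2 - \eps')}$ for all large $n$. Choosing $\eps' < b_\eps/2$ at the outset sends the right-hand side to zero, and since the only randomness enters through $\alpha_n$, whose exceptional null set is common to all $\eps$, the conclusion holds $P$-almost surely for every $\eps > 0$. The Glivenko--Cantelli step is the crux; once it is in place, the separation $b_\eps$ and the prior-mass lower bound finish the proof.
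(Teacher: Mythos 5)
Your proposal is correct and follows essentially the same route as the paper's own proof: the same numerator--denominator decomposition, a uniform law of large numbers for $\{\ell_\theta\}$ to transfer the empirical risk difference to the population risk difference, a separation constant on $\{|\theta-\theta^\star|>\eps\}$ from the monotone integral representation of $R(\theta)-R(\theta^\star)$, and a prior-mass lower bound on the denominator of the form $e^{-nc}\Pi(U_\delta)$. The only cosmetic difference is that you justify the Glivenko--Cantelli property via a VC argument while the paper uses bracketing numbers, and you bound the denominator directly rather than citing the Ghosh--Ramamoorthi lemma; these are interchangeable.
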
 

\begin{proof}
See Appendix~\ref{proofs:consistency}.  
\end{proof}

An immediate consequence is that certain estimators based on the posterior distribution, such as the posterior mean, are consistent. 

\begin{corollary}
\label{crl:post.mean}
If the prior mean for $\theta$ exists, then, under the conditions of Theorem~\ref{thm:consistency}, the posterior mean $\tilde\theta_n$ satisfies $|\tilde\theta_n - \theta^\star| \to 0$ $P$-almost surely as $n \to \infty$.  
\end{corollary}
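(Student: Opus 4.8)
The plan is to deduce consistency of the posterior mean $\tilde\theta_n = \int_\RR \theta\,\Pi_n(d\theta)$ from the posterior concentration already furnished by Theorem~\ref{thm:consistency}. The only content beyond that theorem is a uniform integrability statement: concentration gives, in effect, $\Pi_n \Rightarrow \delta_{\theta^\star}$ almost surely, but convergence of the mean additionally requires that the mass $\Pi_n$ places far from $\theta^\star$ cannot carry away a non-vanishing share of the first moment. Accordingly, I would fix $\eps > 0$ and write
\[ |\tilde\theta_n - \theta^\star| \le \int_\RR |\theta - \theta^\star|\,\Pi_n(d\theta) = \int_{|\theta-\theta^\star|\le\eps} |\theta-\theta^\star|\,\Pi_n(d\theta) + \int_{|\theta-\theta^\star|>\eps}|\theta-\theta^\star|\,\Pi_n(d\theta), \]
bounding the first term trivially by $\eps$. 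Everything then reduces to showing that the tail term tends to $0$ almost surely; letting $\eps \downarrow 0$ along a sequence finishes the argument.

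For the tail term I would use the representation $\Pi_n(A) = N_n(A)/D_n$ with the excess empirical risk $R_n(\theta) - R_n(\theta^\star)$ in the exponent, and two exponential estimates. First, because $\ell_\theta$ takes values in $\{0,1\}$ and the associated sets $\{x : \sign(x-\theta)=1\}$ are half-lines, the class $\{\ell_\theta : \theta\in\RR\}$ is a VC class, so $\sup_\theta |R_n(\theta) - R(\theta)| \to 0$ almost surely; meanwhile the separation built into the hypotheses of Theorem~\ref{thm:consistency} (strict monotonicity of $\eta$ through the value $\tfrac12$ at $\theta^\star$) makes $R(\theta) - R(\theta^\star)$ monotone in $|\theta-\theta^\star|$ on each side and hence bounded below by $\Delta_\eps = \min\{R(\theta^\star+\eps), R(\theta^\star-\eps)\} - R(\theta^\star) > 0$ uniformly over $\{|\theta-\theta^\star|>\eps\}$. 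Combining these, eventually almost surely $R_n(\theta)-R_n(\theta^\star) \ge \Delta_\eps/2$ on the whole tail, so the tail contribution to the moment-weighted numerator is at most $e^{-n\Delta_\eps/2}\int_\RR |\theta-\theta^\star|\,\pi(\theta)\,d\theta$; here the hypothesis that the prior mean exists is exactly what makes $\int_\RR|\theta-\theta^\star|\,\pi(\theta)\,d\theta < \infty$. Second, since $R(\theta)-R(\theta^\star)\to 0$ as $\theta\to\theta^\star$, I would choose a ball $B_\delta$ about $\theta^\star$ on which the excess risk stays below $\Delta_\eps/8$, obtaining (again from the uniform law of large numbers) the matching lower bound $D_n \ge \pi(B_\delta)\,e^{-n\Delta_\eps/4}$ eventually almost surely, with $\pi(B_\delta) > 0$ because the prior density is positive near $\theta^\star$.

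Dividing the two estimates bounds the tail term by $\{\pi(B_\delta)\}^{-1} e^{-n\Delta_\eps/4}\int_\RR|\theta-\theta^\star|\,\pi(\theta)\,d\theta$, which tends to $0$ almost surely; hence $\limsup_n |\tilde\theta_n-\theta^\star| \le \eps$ almost surely, and letting $\eps\downarrow 0$ yields the claim. The main obstacle is precisely this tail control: posterior concentration alone is too weak, and the delicate point is to arrange the exponential decay rate of the numerator ($\Delta_\eps/2$) to strictly dominate that of the denominator ($\Delta_\eps/4$) — achieved by shrinking $B_\delta$ so its excess risk is small relative to the separation gap $\Delta_\eps$ — while leaning on the finite prior mean to keep the first-moment tail integrable.
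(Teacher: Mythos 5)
Your proof is correct and takes essentially the same route as the paper's: Jensen's inequality, the partition into $\{|\theta-\theta^\star|\le\eps\}$ and its complement, and domination of the tail integral by the prior first moment times a uniformly, exponentially vanishing posterior-to-prior ratio, which is exactly where the finite prior mean enters. The only difference is cosmetic: you inline the numerator and denominator estimates (uniform law of large numbers plus risk separation, and prior mass on a low-excess-risk ball near $\theta^\star$) that the paper simply imports from its proof of the consistency theorem.
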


\begin{proof}
See Appendix~\ref{proofs:consistency}.  
\end{proof}
}

\subsection{Posterior convergence rates}
\label{SS:rates}

The Gibbs posterior convergence rate describes, roughly, the size of the neighborhood around $\theta^\star$ that it assigns nearly all its mass as $n \to \infty$.  An important consequence of a posterior convergence rate result is that typical posterior summaries also have nice convergence rate properties; see Corollary~\ref{crl:post.mean.rate}.  

It turns out that the posterior convergence rate result holds under virtually the same conditions as Theorem~\ref{thm:m.rate} for the M-estimator.  The only additional condition needed concerns the prior, and it is very mild.  

\begin{assumption}
\label{asp:three}
The prior distribution $\Pi$ for $\theta$ has a density $\pi$ which is continuous and bounded away from zero in a neighborhood of $\theta^\star$.  
\end{assumption}

\begin{theorem}
\label{thm:post.rate}
Under Assumptions~\ref{asp:one}--\ref{asp:three}, the Gibbs posterior distribution $\Pi_n$ in \eqref{eq:post} satisfies $\Pi_n(A_n) = o_P(1)$ as $n \to \infty$, where $A_n = \{\theta: |\theta-\theta^\star| > a_n n^{-r}\}$, $r = (1+2\gamma)^{-1}$ for $\gamma$ in \eqref{eq:gamma.new}, and $a_n$ is any diverging sequence.  
\end{theorem}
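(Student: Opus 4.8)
The plan is to control $\Pi_n(A_n)$ through the ratio obtained by centering the exponent of \eqref{eq:post} at $\theta^\star$: set
\[
N_n = \int_{A_n} e^{-n[R_n(\theta) - R_n(\theta^\star)]}\,\Pi(d\theta), \qquad
D_n = \int_{\RR} e^{-n[R_n(\theta) - R_n(\theta^\star)]}\,\Pi(d\theta),
\]
so that $\Pi_n(A_n) = N_n/D_n$. Everything rests on two deterministic facts together with one empirical-process bound. First, writing $R(\theta) = \int_{-\infty}^\theta \eta(x)p(x)\,dx + \int_\theta^\infty (1-\eta(x))p(x)\,dx$ gives $R(\theta)-R(\theta^\star) = \int_{\theta^\star}^{\theta}(2\eta(x)-1)p(x)\,dx$; combining the two inequalities in \eqref{eq:gamma.new} with Assumption~\ref{asp:one} then yields the excess-risk sandwich $c\,|\theta-\theta^\star|^{1+\gamma_1} \le R(\theta)-R(\theta^\star) \le C\,|\theta-\theta^\star|^{1+\gamma_2}$ for $\theta$ in a fixed neighborhood of $\theta^\star$, while the strict-monotonicity part of Assumption~\ref{asp:two} keeps $R(\theta)-R(\theta^\star)$ bounded below by a positive constant outside that neighborhood. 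Second, a direct computation shows $\ell_\theta-\ell_{\theta^\star}$ is supported on the $x$-interval between $\theta^\star$ and $\theta$, so $P(\ell_\theta-\ell_{\theta^\star})^2 = P|\ell_\theta-\ell_{\theta^\star}| \le C\,|\theta-\theta^\star|$.

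For the denominator I would restrict the integral to the tiny ball $B_n = \{\theta:|\theta-\theta^\star|\le n^{-1}\}$. On $B_n$ the quantity $n|R_n(\theta)-R_n(\theta^\star)| = |\sum_{i=1}^n(\ell_\theta-\ell_{\theta^\star})(X_i,Y_i)|$ is bounded by the number of sample points falling in the interval between $\theta^\star$ and $\theta$, hence by the count in $\{x:|x-\theta^\star|\le n^{-1}\}$, which is $O_P(1)$ uniformly over $\theta\in B_n$; call this sup $K_n = O_P(1)$. With the lower bound $\pi_0>0$ on the prior density from Assumption~\ref{asp:three}, this gives $D_n \ge 2\pi_0 n^{-1} e^{-K_n}$, i.e.\ $-\log D_n \le \log n + O_P(1)$.

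The crux is the numerator. I would show that, with probability tending to one,
\[
\sup_{\theta\in A_n}\frac{|(\PP_n - P)(\ell_\theta - \ell_{\theta^\star})|}{R(\theta)-R(\theta^\star)} \le \tfrac12 ,
\]
so that $R_n(\theta)-R_n(\theta^\star) \ge \tfrac12[R(\theta)-R(\theta^\star)] \ge \tfrac{c}{2}|\theta-\theta^\star|^{1+\gamma_1}$ on $A_n$. The tool is a peeling argument over dyadic shells $S_j = \{2^j a_n n^{-r} < |\theta-\theta^\star| \le 2^{j+1} a_n n^{-r}\}$: on $S_j$ the denominator of the ratio is $\gtrsim (2^j a_n n^{-r})^{1+\gamma_1}$, while $\{\ell_\theta\}$ is a VC class of half-line indicators whose shell envelope has $L_2$-norm comparable to its standard deviation, so a maximal inequality gives, \emph{without} a logarithmic factor, $\E\sup_{S_j}|(\PP_n - P)(\ell_\theta-\ell_{\theta^\star})| \lesssim (2^j a_n n^{-r}/n)^{1/2}$. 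Since $r(1+2\gamma_1)=1$, the expected ratio on $S_0$ is of order $a_n^{-(1+2\gamma_1)/2}$ and the shell bounds decay geometrically in $j$, so the supremum is $O_P(a_n^{-(1+2\gamma_1)/2})=o_P(1)$; the region where $|\theta-\theta^\star|$ is bounded away from $\theta^\star$ is handled separately by Glivenko--Cantelli, the excess risk being bounded below and the integrand exponentially small there. On the resulting event, a Gaussian-type tail estimate for the integral, with the identities $1-r(1+\gamma_1)=r\gamma_1$ and $1-r\gamma_1=r(1+\gamma_1)$, yields
\[
N_n \le \pi_{\max}\int_{A_n} e^{-\tfrac{nc}{2}|\theta-\theta^\star|^{1+\gamma_1}}\,d\theta
\;\lesssim\; \frac{1}{a_n^{\gamma_1}\,n^{r(1+\gamma_1)}}\,\exp\!\Big(-\tfrac{c}{2}\,a_n^{1+\gamma_1}\,n^{r\gamma_1}\Big).
\]

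Combining the two bounds, and noting that the numerator prefactor contributes $-r(1+\gamma_1)\log n$ while the denominator contributes $+\log n$, so the two combine to $(1-r(1+\gamma_1))\log n = r\gamma_1\log n$, I get
\[
\log\Pi_n(A_n) \le r\gamma_1\log n - \tfrac{c}{2}\,a_n^{1+\gamma_1} n^{r\gamma_1} + O_P(\log(n a_n)).
\]
When $\gamma_1>0$ the term $a_n^{1+\gamma_1}n^{r\gamma_1}$ dominates $r\gamma_1\log n$, and when $\gamma_1=0$ the bound reduces to $-\tfrac{c}{2}a_n + O_P(\log a_n)$; in both cases it tends to $-\infty$ because $a_n\to\infty$, giving $\Pi_n(A_n)=o_P(1)$. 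I expect the main obstacle to be the third paragraph: securing the modulus-of-continuity bound $\sqrt{|\theta-\theta^\star|/n}$ with the correct, log-free dependence on the shell radius, and controlling the supremum uniformly across the infinitely many shells and the far tail. This is precisely where the balance between the noise level $\sqrt{\delta/n}$ and the curvature $\delta^{1+\gamma_1}$ forces the critical radius $\delta\asymp n^{-r}$ with $r=(1+2\gamma_1)^{-1}$, and it is the same empirical-process input that drives Theorem~\ref{thm:m.rate}; the comparability of envelope and standard deviation on each shell is exactly what removes the logarithmic factor noted in Remark~\ref{re:cond1}.
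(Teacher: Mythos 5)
Your proposal is correct, and on the crux --- the numerator --- it is essentially the paper's own argument: the paper's Lemma~\ref{lem:num} also proceeds by peeling dyadic shells around $\theta^\star$, lower-bounding the excess risk on each shell by a constant times $(2^m s_n)^{1+\gamma_1}$ (its Lemma~\ref{lem:loss.diff.bnd}) and controlling the empirical process via the log-free local maximal inequality $\E \sup_{|\theta-\theta^\star|<\delta} |\GG_n(\ell_\theta - \ell_{\theta^\star})| \lesssim \delta^{1/2}$ (its Lemmas~\ref{lem:bracket}--\ref{lem:maximal}), ending with exactly your $O(a_n^{-1/2-\gamma_1})$ failure probability; and indeed the same lemma drives Theorem~\ref{thm:m.rate}, as you anticipated. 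Where you genuinely depart from the paper is the denominator. The paper invokes a Shen--Wasserman-type result (Lemma~\ref{lem:den}): with $\Theta_n = \{\theta : R(\theta)-R(\theta^\star) \le t_n\}$ and $n t_n \to \infty$, one has $D_n \gtrsim \Pi(\Theta_n) e^{-2nt_n}$; it then needs the $\gamma_2$ half of \eqref{eq:gamma.new} to show $\Theta_n$ contains a $\theta$-ball of radius $t_n^{1/(1+\gamma_2)}$, takes $t_n = n^{-(1-\beta)}$, and obtains only $D_n \gtrsim e^{-H n^\beta}$, which forces the closing constraint $\beta < \gamma_1/(1+2\gamma_1)$. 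Your counting argument --- on $\{|\theta-\theta^\star| \le n^{-1}\}$ the quantity $n|R_n(\theta)-R_n(\theta^\star)|$ is at most the number of $X_i$ within $n^{-1}$ of $\theta^\star$, a Binomial count with $O(1)$ mean by Assumption~\ref{asp:one} --- exploits the indicator structure of $\ell_\theta$ and gives the much sharper bound $-\log D_n \le \log n + O_P(1)$, using only Assumptions~\ref{asp:one} and~\ref{asp:three} and never touching $\gamma_2$. This is more elementary, makes the final combination trivial (a positive power of $n$ beats $\log n$), and avoids the paper's choice of $\beta$; the price is that it is less portable to Gibbs posteriors built from losses that are not $\{0,1\}$-valued and locally supported, which is exactly the generality the Shen--Wasserman route retains. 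Two small repairs: Assumption~\ref{asp:three} provides no \emph{global} upper bound $\pi_{\max}$ on the prior density, so your Gaussian-tail refinement of $N_n$ is not justified as written --- but it is also unnecessary, since bounding $\Pi(A_n) \le 1$ and the integrand by its supremum already gives $N_n \le \exp\{-\tfrac{c}{2} a_n^{1+\gamma_1} n^{r\gamma_1}\}$, which dominates the $+\log n$ from $D_n$ because $\gamma_1 > 0$; and for the same reason your $\gamma_1 = 0$ case is moot, Assumption~\ref{asp:two} requiring $\gamma_1$ strictly positive.
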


\begin{proof}
See Appendix~\ref{proofs:rates}.  
\end{proof}


\begin{corollary}
\label{crl:post.mean.rate}
Under the conditions of Theorem~\ref{thm:post.rate}, if the prior mean for $\theta$ exists, then the posterior mean $\tilde\theta_n$ satisfies $\tilde\theta_n - \theta^\star = O_P(n^{-r})$ as $n \to \infty$.  
\end{corollary}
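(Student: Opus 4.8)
The plan is to bound the deviation of the posterior mean by the first absolute moment of the posterior and to show this is $O_P(n^{-r})$. Since $\Pi_n$ is a probability measure,
\[
|\tilde\theta_n - \theta^\star| = \Bigl| \int_\RR (\theta - \theta^\star)\,\Pi_n(d\theta)\Bigr| \le \int_\RR |\theta - \theta^\star|\,\Pi_n(d\theta).
\]
Because $R_n \ge 0$, the posterior density is dominated by the prior density, so $\int |\theta - \theta^\star|\,\Pi_n(d\theta) \le D_n^{-1}\int_\RR|\theta-\theta^\star|\,\pi(\theta)\,d\theta$, where $D_n = \int_\RR e^{-n[R_n(\theta)-R_n(\theta^\star)]}\pi(\theta)\,d\theta$; the hypothesis that the prior mean exists (with Assumption~\ref{asp:three}) therefore guarantees both that $\tilde\theta_n$ is well defined and that this integral is finite. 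I would work with the normalized representation $\Pi_n(d\theta) = D_n^{-1} e^{-n[R_n(\theta)-R_n(\theta^\star)]}\pi(\theta)\,d\theta$, which reduces the task to an upper bound on a radius-weighted numerator together with a lower bound on $D_n$; for the latter I would import the denominator lower bound already established in the proof of Theorem~\ref{thm:post.rate}, which holds with probability tending to one.

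The core is a three-region decomposition of $\int|\theta-\theta^\star|\,\Pi_n(d\theta)$ at radius $\delta_n = M n^{-r}$ for a constant $M$ to be chosen large. On the bulk $\{|\theta-\theta^\star|\le\delta_n\}$ the integrand is at most $\delta_n$, so this piece is at most $M n^{-r}=O(n^{-r})$, irrespective of posterior mass. On the far region $\{|\theta-\theta^\star|>\delta_0\}$, for a fixed neighborhood radius $\delta_0$, Assumption~\ref{asp:two} makes $\theta^\star$ a well-separated minimizer, so $R(\theta)-R(\theta^\star)$ is bounded below by a positive constant there; since the class $\{\ell_\theta\}$ is VC, $\sup_\theta|R_n(\theta)-R(\theta)| = O_P(n^{-1/2})$, whence the empirical excess risk exceeds $c/2$ uniformly on this region with probability tending to one, and the contribution is at most $e^{-cn/2}\,D_n^{-1}\int_\RR|\theta-\theta^\star|\,\pi(\theta)\,d\theta$. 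The finite prior mean in the numerator and the $D_n$ lower bound make this $o_P(n^{-r})$; this is exactly the step where the remark that the posterior tails match the prior tails is used. The intermediate region $\{\delta_n<|\theta-\theta^\star|\le\delta_0\}$ I would partition into dyadic shells $S_j = \{2^{j-1}\delta_n<|\theta-\theta^\star|\le 2^j\delta_n\}$ and bound $\int_{S_j}|\theta-\theta^\star|\,\Pi_n(d\theta)\le 2^j\delta_n\,\Pi_n(S_j)$.

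The main obstacle is the shell estimate for $\Pi_n(S_j)$, which is the same localized empirical-process analysis that underlies Theorem~\ref{thm:post.rate}, now carried out with explicit dependence on the shell index. On $S_j$ the deterministic excess risk satisfies $R(\theta)-R(\theta^\star)\ge c_\star(2^{j-1}\delta_n)^{1+\gamma_1}$, obtained by integrating $2\{\eta(x)-\tfrac12\}\,p(x)$ and invoking the lower bound in \eqref{eq:gamma.new} together with Assumption~\ref{asp:one}; meanwhile $\ell_\theta-\ell_{\theta^\star}$ is an indicator of an interval of length $|\theta-\theta^\star|$, so its variance is of order $|\theta-\theta^\star|$ and a maximal inequality for this VC class gives $\GG_n$-fluctuations of order $(2^j\delta_n)^{1/2}$ over $S_j$, with no logarithmic factor. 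With $\delta_n = M n^{-r}$ and $r=(1+2\gamma_1)^{-1}$, the defining balance of $r$ makes $n\,(2^j\delta_n)^{1+\gamma_1}$ dominate $n^{1/2}(2^j\delta_n)^{1/2}$ uniformly in $j\ge 1$ once $M$ is large, yielding, with probability tending to one, a bound on $\Pi_n(S_j)$ that decays like $\exp\{-K\,2^{j(1+\gamma_1)}\}$ in the shell index (times factors at most polynomial in $n$ that the exponential absorbs), where $K$ increases with $M$. The delicate point is tracking this $j$-dependence finely enough that $\sum_{j\ge 1} 2^j\,\Pi_n(S_j)$ converges. Granting it, the intermediate region contributes at most $C(M)\,n^{-r}$ with probability tending to one, where $C(M)\to 0$ as $M\to\infty$. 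Combining the three pieces gives $\int|\theta-\theta^\star|\,\Pi_n(d\theta)\le\{M+C(M)+o_P(1)\}\,n^{-r}$, and choosing $M$ to make $C(M)$ small establishes $\tilde\theta_n-\theta^\star=O_P(n^{-r})$.
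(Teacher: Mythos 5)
Your proposal is correct in substance and reaches the rate by a more hands-on route than the paper. The shared skeleton: Jensen's inequality, a split of $\RR$ into a bulk around $\theta^\star$ and its complement, an exponentially small bound on the complementary posterior mass multiplied by the prior first absolute moment (this is where existence of the prior mean enters), and the denominator bound $D_n \gtrsim e^{-Hn^\beta}$ with $\beta < \gamma_1 r$ imported from the proof of Theorem~\ref{thm:post.rate}. The difference is in how the complement is treated. You take the bulk radius $Mn^{-r}$ with $M$ a large \emph{constant} and then redo the localization inside the corollary: dyadic shells up to a fixed $\delta_0$, the excess-risk lower bound $c(2^{j-1}\delta_n)^{1+\gamma_1}$ (Lemma~\ref{lem:loss.diff.bnd}), and the maximal inequality (Lemma~\ref{lem:maximal}) --- essentially van der Vaart's Theorem~5.52 shelling argument carried out at the level of the posterior moment, plus a separate crude bound far from $\theta^\star$. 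The paper instead takes the bulk radius $\tilde s_n = \tilde a_n n^{-r}$ with $\tilde a_n \to \infty$ slowly; then Lemma~\ref{lem:num} applies verbatim to the whole complement and yields a single uniform bound, posterior density $\le Z_n$ times prior density with $Z_n = o_P(1)$ exponentially small, and the conclusion $\tilde\theta_n - \theta^\star = o_P(a_n n^{-r})$ for every divergent $a_n$ is equivalent to $O_P(n^{-r})$. Your version is more self-contained and gives explicit quantification in $M$; the paper's is shorter because the shelling already lives inside Lemma~\ref{lem:num} and is simply reused.

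Two points need tightening. First, with your centered normalizer $D_n = \int e^{-n[R_n(\theta)-R_n(\theta^\star)]}\pi(\theta)\,d\theta$, the posterior density is \emph{not} dominated by $\pi/D_n$: $\theta^\star$ need not minimize $R_n$, so $e^{-n[R_n(\theta)-R_n(\theta^\star)]}$ exceeds one wherever $R_n(\theta) < R_n(\theta^\star)$. The domination holds with the uncentered normalizer $\int e^{-nR_n(\theta)}\pi(\theta)\,d\theta$ (using $R_n \ge 0$), or at the cost of a factor $e^{nR_n(\theta^\star)} \le e^n$; this is harmless since you only use it for well-definedness, but as written the inequality is false. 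Second, and more conceptually, your shell bounds cannot hold ``with probability tending to one'' for fixed $M$. At the critical radius, Markov plus Lemma~\ref{lem:maximal} gives a per-shell failure probability of order $(2^j M)^{-(1/2+\gamma_1)}$; the exponent identity $r(\tfrac12+\gamma_1) = \tfrac12$ makes this \emph{independent of $n$}, so the union over shells fails with probability of order $M^{-(1/2+\gamma_1)}$, a constant in $n$. (This is visible in the paper's proof of Lemma~\ref{lem:num}, where the failure probability is $\lesssim a_n^{-1/2-\gamma_1}$ and vanishes only because $a_n$ diverges --- which is precisely why the paper inflates the radius.) Your conclusion survives because $O_P$ only requires smallness of the failure probability for large $M$: on the good event the three pieces give $|\tilde\theta_n - \theta^\star| \le \{M + C(M) + o(1)\}n^{-r}$, and given $\eps > 0$ one picks $M$ with $M^{-(1/2+\gamma_1)} \lesssim \eps$. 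So your final display should be stated as a bound holding on an event of probability at least $1 - \eps(M) - o(1)$, with $\eps(M) \to 0$ as $M \to \infty$, rather than with an $o_P(1)$ term that suggests validity with probability tending to one for fixed $M$.
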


\begin{proof}
See Appendix~\ref{proofs:rates}. 
\end{proof}

\subsection{On scaling the loss function}
\label{SS:scale}

A subtle point is that the loss function $\ell_\theta$ has an arbitrary scale.  That is, the problem of inference on the risk minimizer is unchanged if we replace $\ell_\theta$ with $\omega \ell_\theta$ for any $\omega > 0$.  While this has no effect on the M-estimator, it does have an effect on our Gibbs posterior.  Based on our experience, the posterior distribution actually tends to be quite narrow, so the posterior concentration rates seem to be driven primarily by the ``center'' of the posterior and, to a lesser extent, by the ``spread.''  As \citet{bissiri.holmes.walker.2013} explain, it is important to scale the loss in some way.  In Lemma~\ref{lem:scale} below, we show that the Gibbs posterior may be scaled by a vanishing sequence without sacrificing the convergence rate in Theorem~\ref{thm:post.rate}.  Our numerical results in Section~\ref{S:examples} show that taking the scale to be vanishing accomplishes the goal of calibrating the credible intervals without affecting the accuracy of the posterior mean estimator.  

\begin{lemma}
\label{lem:scale}
Under Assumptions~\ref{asp:one}--\ref{asp:three}, with $r = (1+2\gamma)^{-1}$ and $\gamma$ defined in \eqref{eq:gamma.new}, the conclusion of Theorem~\ref{thm:post.rate} holds if the loss function $\ell_\theta$ is scaled by a sequence $\omega_n$ that vanishes strictly more slowly than $n^{-\gamma r}$.  
\end{lemma}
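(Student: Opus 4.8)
The plan is to re-run the proof of Theorem~\ref{thm:post.rate} while carrying the scale factor $\omega_n$ through every estimate, and to isolate the single place where $\omega_n$ can spoil the conclusion. Writing the scaled posterior as $\Pi_n^\omega(A) \propto \int_A e^{-n\omega_n R_n(\theta)}\,\Pi(d\theta)$ and recentring at $\theta^\star$, I would express $\Pi_n^\omega(A_n) = N_n/D_n$ with
\[
N_n = \int_{A_n} e^{-n\omega_n[R_n(\theta)-R_n(\theta^\star)]}\,\pi(\theta)\,d\theta, \qquad D_n = \int_{\RR} e^{-n\omega_n[R_n(\theta)-R_n(\theta^\star)]}\,\pi(\theta)\,d\theta,
\]
and decompose $R_n(\theta)-R_n(\theta^\star) = [R(\theta)-R(\theta^\star)] + (\PP_n - P)(\ell_\theta - \ell_{\theta^\star})$ into a deterministic ``signal'' and an empirical ``noise'' term. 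The whole argument rests on the observation that scaling multiplies \emph{both} terms by $\omega_n$, so the balance producing the rate $n^{-r}$ is untouched; $\omega_n$ can only affect whether the signal is large enough in absolute terms.

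Two ingredients are inherited from the proof of Theorem~\ref{thm:post.rate}. First, Assumptions~\ref{asp:one}--\ref{asp:two} give the curvature bound $R(\theta)-R(\theta^\star) \geq c\,|\theta-\theta^\star|^{1+\gamma_1}$ near $\theta^\star$: since $R(\theta)-R(\theta^\star) = \int_{\theta^\star}^{\theta}(2\eta(x)-1)\,p(x)\,dx$ for $\theta>\theta^\star$ (and symmetrically below), the lower bound in \eqref{eq:gamma.new} together with $p$ bounded below yields the stated power. Second, since the class $\{\ell_\theta-\ell_{\theta^\star}\}$ consists of bounded, indicator-type differences supported on the interval between $\theta^\star$ and $\theta$, a maximal/Bernstein inequality gives $\sup_{|\theta-\theta^\star|\leq\delta}|(\PP_n-P)(\ell_\theta-\ell_{\theta^\star})| = O_P(\sqrt{\delta/n})$, uniformly in $\delta$. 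I would then peel $A_n$ into dyadic shells $S_j = \{2^j d_n < |\theta-\theta^\star| \leq 2^{j+1}d_n\}$, $j\geq 0$, with $d_n = a_n n^{-r}$. On $S_j$ the scaled signal is $\gtrsim n\omega_n(2^j d_n)^{1+\gamma_1}$ and the scaled noise is $\lesssim n\omega_n\sqrt{2^j d_n/n}$; their ratio is $(2^j d_n)^{1/2+\gamma_1}\sqrt{n}$, and because $r=(1+2\gamma_1)^{-1}$ forces $d_n^{1/2+\gamma_1}\sqrt{n} = a_n^{1/2+\gamma_1}$, this ratio is $\geq a_n^{1/2+\gamma_1}\,2^{j(1/2+\gamma_1)}\to\infty$ \emph{independently of} $\omega_n$. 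Hence the noise is swamped on every shell, and bounding $\pi$ above via Assumption~\ref{asp:three} gives $N_n \lesssim \sum_{j\geq 0} 2^j d_n\, e^{-\frac{c}{2}n\omega_n(2^j d_n)^{1+\gamma_1}}$, a convergent series dominated by its $j=0$ term.

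For the denominator I would restrict to a shrinking ball $\{|\theta-\theta^\star|\leq h_n\}$ on which, by the upper bound in \eqref{eq:gamma.new} and the same empirical control, the scaled exponent is $O(1)$; choosing $h_n$ so that $n\omega_n h_n^{1+\gamma_2}$ and $\omega_n\sqrt{n h_n}$ stay bounded, and using $\pi$ bounded below, gives $D_n \gtrsim h_n$. Assembling the two bounds and using the identity $1 - r(1+\gamma_1) = \gamma_1 r$ yields
\[
\Pi_n^\omega(A_n) \lesssim \frac{d_n}{h_n}\, e^{-\frac{c}{2}\,a_n^{1+\gamma_1}\,\omega_n\, n^{\gamma_1 r}}.
\]
The prefactor $d_n/h_n$ is only polynomial in $n$, whereas the exponent diverges precisely when $\omega_n n^{\gamma_1 r}\to\infty$, i.e.\ when $\omega_n$ vanishes strictly more slowly than $n^{-\gamma_1 r}$; the divergence of $a_n$ supplies the extra room to beat the polynomial prefactor, giving $\Pi_n^\omega(A_n)=o_P(1)$.

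The hard part is making rigorous the claim that the empirical comparison is genuinely scale-free, so that $\omega_n$ enters only through the requirement that the signal at radius $n^{-r}$ actually diverge. This needs the uniform-in-shell tail control of $(\PP_n-P)(\ell_\theta-\ell_{\theta^\star})$ on a \emph{single} high-probability event, so that the random factors $e^{n\omega_n W_j}$ are simultaneously tamed and the shell sum converges; this is the technical core reused from Theorem~\ref{thm:post.rate}, and one must check that multiplication by the possibly-large factor $n\omega_n$ does not upset it (it does not, because the signal dominates the noise by the $\omega_n$-free factor $a_n^{1/2+\gamma_1}$). A secondary subtlety is confirming that the exponential decay beats the polynomial prefactor $d_n/h_n$ for an arbitrary diverging $a_n$, which is exactly where the freedom in the statement of the rate is exploited.
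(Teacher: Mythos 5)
Your proposal is correct in substance and, for the key step, coincides with what the paper does: the paper's proof of Lemma~\ref{lem:scale} is literally ``similar to the proof of Theorem~\ref{thm:post.rate},'' and your shell-peeling bound on the numerator is exactly the mechanism of Lemma~\ref{lem:num}. In particular, your observation that the signal-to-noise ratio on the shell $S_j$ equals $a_n^{1/2+\gamma_1}2^{j(1/2+\gamma_1)}$, free of both $n$ and $\omega_n$, is the precise reason the peeling survives the scaling, and your identity $1-r(1+\gamma_1)=\gamma_1 r$ is exactly where the threshold $n^{-\gamma_1 r}$ comes from, matching the exponent $\omega_n a_n^{1+\gamma_1}n^{\gamma_1 r}$ that the paper's argument produces. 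Where you genuinely deviate is the denominator: the paper adapts Lemma~\ref{lem:den} (a Shen--Wasserman-type bound $D_n \gtrsim \Pi(\Theta_n)e^{-2n\omega_n t_n}$ over risk-neighborhoods $\Theta_n=\{\theta: R(\theta)-R(\theta^\star)\le t_n\}$, followed by the $t_n$ bookkeeping and a choice of exponent $\beta<\gamma_1 r$), whereas you restrict to a ball of radius $h_n$ on which the scaled exponent stays $O_P(1)$ and conclude $D_n\gtrsim h_n$ directly. Your version is arguably cleaner --- no exponential penalty $e^{-2n\omega_n t_n}$ and no $\beta$ to tune --- at the cost of also needing the maximal inequality (Lemma~\ref{lem:maximal}) near $\theta^\star$, which is available.

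Two small repairs. First, Assumption~\ref{asp:three} gives a \emph{lower} bound on $\pi$ near $\theta^\star$ (plus local continuity), not a global upper bound, so you cannot justify $N_n\lesssim \sum_{j\ge 0} 2^jd_n\, e^{-\frac{c}{2}n\omega_n(2^jd_n)^{1+\gamma_1}}$ as stated; replace the factor $2^jd_n$ by $\Pi(S_j)\le 1$, which only changes the prefactor and nothing downstream (the paper likewise bounds $N_n(A_n)$ by the supremum of the integrand times total prior mass one). Second, your last step needs the exponent $\omega_n a_n^{1+\gamma_1}n^{\gamma_1 r}$ to dominate the logarithm of the polynomial prefactor, $\log(d_n/h_n)\asymp\log n$, for an \emph{arbitrary} diverging $a_n$; this holds when ``vanishes strictly more slowly than $n^{-\gamma_1 r}$'' is read in the polynomial-order sense $\omega_n\gtrsim n^{-\gamma_1 r+\eps}$, which is also what the paper's own argument requires in order to pick $\beta$ below the growth exponent. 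Under the weakest reading ($\omega_n n^{\gamma_1 r}\to\infty$ arbitrarily slowly, e.g.\ like $\log\log n$), neither your assembly nor the paper's closes, so this is a limitation of the statement's phrasing rather than a gap in your argument relative to the intended proof.
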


\begin{proof}
Similar to the proof of Theorem~\ref{thm:post.rate} in Appendix~\ref{proofs:rates}. 
\end{proof}

In our experience, with continuous $\eta$, we found that a scale value of approximately $\omega_n = c n^{-1/4}$, for $c \in (1,2)$, worked well in terms of credible interval calibration.  To avoid making an ad hoc choice of constant, we employ the algorithm in \citet{syring.martin.gibbs}.  Our scaling algorithm is applied to each simulated data set, producing a different, data-dependent value of the scale parameter each time.  Briefly, $\omega_n$  is determined by solving the equation that sets the Gibbs posterior credible interval coverage probability equal to the desired confidence level.  The algorithm utilizes standard techniques including stochastic approximation, MCMC, and bootstrapping.  In our simulations, the algorithm succeeds in producing approximately calibrated credible intervals.  In the numerical examples that follow, the $\omega_n$ selected by the algorithm is, on average, roughly $1.5 n^{-1/4}$, which is consistent with the result in Lemma~\ref{lem:scale}.  

\section{Numerical examples}
\label{S:examples}

We consider four examples to illustrate the performance of our Gibbs posterior for the MCID.  Each example has a different marginal distribution for $X$:
\begin{description}
\item[\it Example~1.] $X \sim 0.7 \nm(-1,1) + 0.3 \nm(1,1)$;
\vspace{-2mm}
\item[\it Example~2.] $X \sim \nm(1,1)$; 
\vspace{-2mm}
\item[\it Example~3.] $X \sim \unif(-2,4)$.
\vspace{-2mm}
\item[\it Example~4.] $X \sim \gam(2,0.5)$.
\end{description} 
These examples cover a variety of distributions: bimodal, normal, flat, and skewed.  In each example, we take $n$ independent samples from the respective marginal distributions, and then, given $X_i=x_i$, take $Y_i$ as a $\pm 1$ Bernoulli with probability $F(x_i)$, $i=1,\ldots,n$, where $F$ is the distribution function of $X$ and $\ber(p)$ denotes a Bernoulli distribution with success probability $p$.  In our case, the relevant summaries are the bias and standard deviation of the estimators, and the coverage probability and length of the 90\% interval estimates.  We considered three sample sizes, namely, $n=250, 500, 1000$, and the results in Tables~\ref{tab:estimates}--\ref{tab:intervals} are based on 1000 Monte Carlo samples.  We compare the performance of our Gibbs posterior, using the scaling algorithm in \citet{syring.martin.gibbs} and a flat prior for $\theta$, to a baseline method, namely, the M-estimator and the corresponding percentile bootstrap confidence intervals.  

Table~\ref{tab:estimates} shows the empirical bias and standard deviation for both the M-estimator and the Gibbs posterior mean while Table~\ref{tab:intervals} shows the empirical coverage probability and length for the 90\% interval estimates based on bootstrapping the M-estimator and on the Gibbs posterior sample.  Here we see that the additional flexibility of being able to choose the scaling parameter/sequence provides approximately calibrated posterior credible intervals for each $n$.  Overall, the performance of our Gibbs posterior is comparable to the M-estimator+bootstrap, the take-away message being that a Bayesian-like approach need not sacrifice desirable frequentist properties.  Two comments are in order.  First, if reliable prior information is available, which is possible in medical applications where studies are replicated, then this can be readily incorporated into our analysis, naturally providing some improvements.  For example, if an accurate, informative $\nm(-0.5,1)$ prior is used in Example~1 for $n = 250$, the bias is reduced to $0.01$ and the credible interval length is reduced to $0.83$ with $0.90$ coverage, an improvement over bootstrap confidence intervals.  Second, the desirable frequentist properties are not automatic for other Bayesian approaches; for instance, the Bayesian logistic regression model described in Section~\ref{SS:motivate} has mean square error equal to $0.07$ and $0.20$ in Examples~3 and 4, respectively, with $n=250$, compared to $0.015$ and $0.011$ for the Gibbs posterior mean.

\begin{table}[t]
\centering
\begin{tabular}{clcccc}
Example & Method & $n=250$ & $n=500$ & $n=1000$ \\ 
\hline
1 & M-estimator  & $0.03$ (0.21) & $0.01$ (0.16) & $0.01$ (0.12) \\
          & Gibbs & $0.03$ (0.22) & $0.01$ (0.17) & $0.01$ (0.13) \\ 
          & & & & \\
2 & M-estimator & $0.02$ (0.16) & $0.02$ (0.12) & $0.01$ (0.10) \\
          & Gibbs & $0.00$ (0.15) & $0.01$ (0.12) & $0.00$ (0.10) \\ 
          & & & & \\
3 & M-estimator & $0.01$ (0.12)&$0.01$ (0.10)& $0.01$ (0.08) \\
          & Gibbs &$0.01$ (0.12)& $0.00$ (0.10) & $0.00$ (0.07) \\ 
					& & & & \\
4 & M-estimator & $0.01$ (0.12) & $0.01$ (0.09) & $0.01$ (0.07) \\
          & Gibbs & $0.03$ (0.10) & $0.02$ (0.08) & $0.01$ (0.06) \\ 
\hline
\end{tabular}
\caption{Absolute empirical bias (and standard deviation) for the M-estimator in \citet{xu.mcid} and our proposed Gibbs posterior mean.}
\label{tab:estimates}
\wl
\centering
\begin{tabular}{clcccc}
Example & Method & $n=250$ & $n=500$ & $n=1000$ \\ 
\hline
1 & M+Boot  & $0.91$ (0.86) & $0.91$ (0.69) & $0.93$ (0.53) \\
          & Gibbs & $0.89$ (0.89) & $0.89$ (0.69) & $0.91$ (0.55) \\ 
          & & & & \\
2 & M+Boot &$0.91$ (0.60) & $0.91$ (0.48) & $0.92$ (0.38) \\
          & Gibbs & $0.89$ (0.61) & $0.91$ (0.50) & $0.90$ (0.38) \\ 
          & & & & \\
3 & M+Boot & $0.90$ (0.47) & $0.90$ (0.38) & $0.91$ (0.30) \\
          & Gibbs & $0.91$ (0.48) & $0.90$ (0.37) & $0.90$ (0.30) \\ 
					& & & & \\
4 & M+Boot & $0.92$ (0.39) & $0.92$ (0.31) & $0.92$ (0.25) \\
          & Gibbs & $0.91$ (0.41) & $0.90$ (0.31) & $0.90$ (0.24) \\ 
\hline
\end{tabular}
\caption{Empirical coverage probability (and mean length) of $90\%$ confidence interval based on bootstrapping the M-estimator in \citet{xu.mcid} and the 90\% credible interval from our proposed Gibbs posterior.}
\label{tab:intervals}
\end{table}

\section{Conclusion}
\label{S:discuss}

In this paper, motivated by a real application in medical statistics, we have explored the use of a Gibbs posterior distribution for inference.  In certain applications, like this MCID problem, the statistician may be reluctant to use a likelihood-based model due to fear of misspecification, computational difficulty, or for some other reason.  The Gibbs model offers an alternative Bayesian-like approach that does not require a probability model, thus avoiding some of these potential challenges, and we think that this advantage may make Gibbs models widely applicable.  As we have demonstrated, the proposed Gibbs posterior is theoretically justified and provides quality point and interval estimates in practice.  So, in a certain sense, our Gibbs posterior provides the best of both worlds: that is, we get a theoretically justifiable posterior distribution without the unnecessary modeling and computations and without worry of model misspecification.  

The technical details in this paper are kept relatively simple due to the fact that $\theta$ is a scalar and $\ell_\theta$ is bounded, but our methods can be applied more generally.  For example, \citet{xu.mcid} proposed a generalization of the MCID problem in which $\theta$ is actually a function of some other covariates, thereby making the MCID ``personalized'' in a certain sense.  We are working on extending both the theory and the computational methods presented here to this more general case.  A recent paper \citep{syring.martin.image} develops a nonparametric Gibbs posterior for inference on a function and prove a smoothness-adaptive convergence rate theorem.  The techniques developed therein may be able to be applied in the personalized MCID problem.

\section*{Acknowledgments}

The authors are grateful to the Editor, Associate Editor, and referees for their helpful comments on a previous version of this manuscript.  This work is partially supported by the U.~S.~Army Research Offices, Award \#W911NF-15-1-0154.

\appendix

\section{Technical details and proofs}
\label{S:proofs}

\subsection{Preliminary results}
\label{SS:prelim}

Here, for the sake of completeness, we summarize some basic facts about the empirical risk function $R_n(\theta) = \PP_n \ell_\theta$ and the risk $R(\theta) = P \ell_\theta$.  Details will be given only for those results not taken directly from \citet{xu.mcid}.  

First, we consider properties of the expected loss difference, $R(\theta)-R(\theta^\star)$.  By definition of $\theta^\star$, and Assumption~\ref{asp:two} about $\eta$, we know the difference is strictly positive except at $\theta=\theta^\star$.  To see this, \citet{xu.mcid} show that 
\begin{equation}
\label{eq:loss.diff}
R(\theta) - R(\theta^\star) = 2 \int_{\theta^\star}^\theta \{\eta(x) - \tfrac12\} p(x) \,dx. 
\end{equation}
Moreover, by continuity of $p$ in Assumption~\ref{asp:one} and almost everywhere continuity of $\eta$ derived from Assumption~\ref{asp:two}, we can see that the derivative of $R(\theta)-R(\theta^\star)$ {could be} zero only at $\theta=\theta^\star$, which implies that the function is uniformly bounded away from zero outside an interval containing $\theta^\star$.  This latter point is important because asymptotic results of, say, the M-estimator require that the minimizer $\theta^\star$ be ``well-separated'' \citep[e.g.,][Theorem~5.7]{vaart1998}.  We seek a lower bound on the expected loss difference in \eqref{eq:loss.diff} for parameter values far from the MCID.  That is, we want to calculate
\begin{equation}
\label{eq:inf}
\inf_{|\theta - \theta^\star|>\delta} R(\theta) - R(\theta^\star).
\end{equation}
The following result is new and allows us to improve upon the rate given for the M-estimator in \citet{xu.mcid}.

\begin{lemma}
\label{lem:loss.diff.bnd}
Under Assumptions~\ref{asp:one}--\ref{asp:two}, there exists a constant $c > 0$ such that \eqref{eq:inf} is lower-bounded by $c\delta^{1+\gamma}$ for all sufficiently small $\delta>0$.   
\end{lemma}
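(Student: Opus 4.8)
The plan is to start from the exact representation \eqref{eq:loss.diff} of the expected loss difference and exploit the monotonicity of $\eta$ to reduce the infimum in \eqref{eq:inf} to a one-dimensional computation at the two boundary points $\theta = \theta^\star \pm \delta$. Since $\eta$ is non-decreasing with $\eta(\theta^\star) = \tfrac12$, the integrand $\{\eta(x) - \tfrac12\}p(x)$ in \eqref{eq:loss.diff} is nonnegative for $x \geq \theta^\star$ and nonpositive for $x \leq \theta^\star$; in either case the integral over $[\theta^\star,\theta]$ contributes nonnegatively, so $R(\theta) - R(\theta^\star)$ is non-decreasing as $\theta$ moves away from $\theta^\star$ in either direction. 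By continuity of the integral, the infimum over $\{|\theta - \theta^\star| > \delta\}$ is therefore determined by the values at $\theta = \theta^\star + \delta$ and $\theta = \theta^\star - \delta$, and it suffices to bound these two quantities below.

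For the right endpoint, I would use Assumption~\ref{asp:one} to replace $p(x)$ by a uniform lower bound $p_0 > 0$ valid on a fixed neighborhood of $\theta^\star$, giving
\[ R(\theta^\star + \delta) - R(\theta^\star) \geq 2 p_0 \int_0^\delta \{\eta(\theta^\star + s) - \tfrac12\}\,ds. \]
Since $\eta(\theta^\star) = \tfrac12$, the left (``min'') inequality in \eqref{eq:gamma.new} yields $\eta(\theta^\star + s) - \eta(\theta^\star) > c_1 s^{\gamma_1}$ for all sufficiently small $s > 0$; substituting this pointwise bound into the integrand and integrating $c_1 s^{\gamma_1}$ over $[0,\delta]$ produces a lower bound proportional to $\delta^{1+\gamma_1}/(1+\gamma_1)$. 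The left endpoint is handled identically after flipping the limits of integration and writing $\tfrac12 - \eta(\theta^\star - s) = |\eta(\theta^\star - s) - \eta(\theta^\star)| > c_1 s^{\gamma_1}$, which is precisely the other half of the ``min'' in \eqref{eq:gamma.new}. Taking $c = 2 p_0 c_1/(1+\gamma_1)$ and the smaller of the two endpoint bounds then gives the claim.

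The argument is essentially elementary once the representation \eqref{eq:loss.diff} is available, so there is no single hard step; the points requiring care are (i) justifying that the infimum is governed by $\theta^\star \pm \delta$ rather than by some interior point or by $\pm\infty$, which is exactly where the monotonicity of the loss difference is used, and (ii) ensuring $\delta$ is small enough that both the density lower bound of Assumption~\ref{asp:one} and the lower growth bound of Assumption~\ref{asp:two} hold throughout $[\theta^\star - \delta, \theta^\star + \delta]$ — this is the content of ``sufficiently small $\delta$'' in the statement. A minor bookkeeping subtlety is that \eqref{eq:gamma.new} is a strict inequality at each fixed $\eps$, so I would carry the strict inequality through the integration and relax to ``$\geq$'' only at the end, absorbing the numerical constants into $c$.
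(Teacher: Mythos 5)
Your proposal is correct and follows essentially the same route as the paper's proof: use monotonicity of $\eta$ (hence of the integrand in \eqref{eq:loss.diff}) to reduce the infimum over $\{|\theta-\theta^\star|>\delta\}$ to the two boundary points $\theta^\star\pm\delta$, then combine the density lower bound from Assumption~\ref{asp:one} with the ``min'' half of \eqref{eq:gamma.new}. The only real difference is cosmetic: the paper splits the integral at $\theta^\star+\delta/2$ and bounds the integrand on the outer half by its value at $\theta^\star+\delta/2$, yielding the constant $b c_1/2^{\gamma_1}$, whereas you integrate the pointwise bound $c_1 s^{\gamma_1}$ over $[0,\delta]$, yielding $2p_0c_1/(1+\gamma_1)$; both are valid and give the same $\delta^{1+\gamma_1}$ order.

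One step deserves more care than you give it: you assert $\eta(\theta^\star)=\tfrac12$ twice, and on the left endpoint this equality is doing genuine work. Condition \eqref{eq:gamma.new} controls $|\eta(\theta^\star-s)-\eta(\theta^\star)|$, not $\tfrac12-\eta(\theta^\star-s)$; your identification of these two quantities is exactly the statement $\eta(\theta^\star)=\tfrac12$, and if instead $\eta(\theta^\star)>\tfrac12$ the inequality would run the wrong way ($\tfrac12-\eta(\theta^\star-s)<\eta(\theta^\star)-\eta(\theta^\star-s)$, so the lower bound on the latter gives nothing). The fact you need is true but should be derived: the ``max'' half of \eqref{eq:gamma.new} forces $\eta(\theta^\star\pm\eps)\to\eta(\theta^\star)$, i.e., continuity of $\eta$ at $\theta^\star$, and since $\eta<\tfrac12$ to the left of $\theta^\star=\inf\{x:\eta(x)\geq\tfrac12\}$ and $\eta\geq\tfrac12$ to the right, continuity pins down $\eta(\theta^\star)=\tfrac12$. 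Note that on the right endpoint you can avoid the issue entirely, as the paper does, by writing $\eta(\theta^\star+s)-\tfrac12\geq\eta(\theta^\star+s)-\eta(\theta^\star)$, which only needs $\eta(\theta^\star)\geq\tfrac12$ (immediate from upper semi-continuity and the definition of $\theta^\star$). With that one-line justification added, your argument is complete.
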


\begin{proof}
Since $\eta(x)$ is non-decreasing in $x$ (Assumption~\ref{asp:two}) the infimum in \eqref{eq:inf} occurs at the boundary, either at $\theta^\star + \delta$ or at $\theta^\star - \delta$.  The two cases can be handled similarly, so we give the argument only for the case that the infimum is attained at $\theta^\star + \delta$.  Monotonicity of $\eta$ implies that $\eta(\theta^\star + \delta) > \eta(\theta^\star + \delta/2) > \eta(\theta^\star)$.  Also, according to Assumption~\ref{asp:one}, the marginal density $p$ is bounded away from zero on an interval containing $\theta^\star$, so let $b$ be the infimum over the interval $[\theta^\star - \delta, \theta^\star + \delta]$. Using the expression in \eqref{eq:loss.diff}, we can lower-bound $R(\theta^\star+\delta) - R(\theta^\star)$ as follows:
\begin{align*}
R(\theta^\star + \delta) - R(\theta^\star) & = \int_{\theta^\star}^{\theta^\star + \delta} \{2 \eta(x) - 1\} p(x) \,dx \\
& = \Bigl(\int_{\theta^\star}^{\theta^\star + \delta/2} + \int_{\theta^\star + \delta/2}^{\theta^\star + \delta}  \Bigr) \{2\eta(x) - 1\} p(x) \, dx \\
& > b \, \delta \, \{\eta(\theta^\star + \delta / 2) - \tfrac12\} \\
& \geq b \, \delta \, \{\eta(\theta^\star + \delta/2) - \eta(\theta^\star)\}.
\end{align*}
By Assumption~\ref{asp:two}, in particular, condition \eqref{eq:gamma.new}, we have that the difference in the last display is bounded below by $c_1 (\delta / 2)^{\gamma}$.  Plugging this in at the end of the above display gives the advertised lower bound, $c \delta^{1 + \gamma}$, where $c=b c_1 / 2^{\gamma}$.  
\end{proof}

\ifthenelse{1=1}{}{
\[ \int_{\theta^\star}^{\theta^\star+\delta} (2\eta(x)-1)p(x)dx > \int_{\theta^\star}^{\theta^\star+\xi} (2\eta(x)-1)p(x)dx + c_p(\delta - \xi) [2\eta(\theta^\star + \xi)-1].\]
Since the integral on the right hand side is positive, we can ignore it and simplify the bound to
\[\int_{\theta^\star}^{\theta^\star+\delta} (2\eta(x)-1)p(x)dx > c_p(\delta - \xi) [2\eta(\theta^\star + \xi)-1].\]
Make the choice $\xi = \delta/2$ and use the bounds in Assumption~\ref{asp:two} to express the bound on the risk difference as
\begin{align*}
\int_{\theta^\star}^{\theta=\theta^\star+\delta} (2\eta(x)-1)p(x)dx &> c_p\delta/2 [2\eta(\theta^\star + \delta/2)-1] \\
&> c_p\delta [\eta(\theta^\star + \delta/2)-1/2]\\
&> c_pc_1\delta^{1+\gamma}
\end{align*}
using that $\eta(\theta^\star) = 1/2$.  

Next, suppose that the infimum occurs at $\theta^\star - \delta$ instead.  Since $\theta^\star$ is defined as $\inf\{x:\eta(x)\geq 1/2\}$, by Assumption~\ref{asp:two} there exists a $\xi>0$ where $\theta^\star - \delta < \theta^\star - \xi < \theta^\star$ such that $\eta(\theta^\star - \delta) \leq \eta(\theta^\star - \xi) < \eta(\theta^\star)$.  Then, by following the above argument, we have that
\begin{align*}
\int_{\theta^\star-\delta}^{\theta^\star} (1-2\eta(x))p(x)dx &> c_p\delta/2 [1 - 2\eta(\theta^\star - \delta/2)] \\
&> c_p\delta [1/2 - \eta(\theta^\star - \delta/2)]\\
&> c_pc_2\delta^{1+\gamma}.
\end{align*} 
In each case, the infimum can be bounded by an expression of the form $c\delta^{1+\gamma}$, as was to be shown. 
}

Second, we need some approximation properties of the class of functions 
\[ \L_\delta := \{\ell_\theta-\ell_{\theta^\star}: |\theta-\theta^\star| < \delta\}, \quad \delta > 0. \]
\citet{xu.mcid} shows, using the standard partition in the classical Glivenko--Cantelli theorem \citep[e.g.,][Example~19.6]{vaart1998}, that the $L_1(P)$ $\eps$-bracketing number $\Nbrack(\eps, \L_\infty, L_1(P))$ is proportional to $\eps^{-1}$.  This is enough to show that the class $\L_\infty$ is Glivenko--Cantelli, from which a uniform law of large numbers follows.  However, better rates can be obtained by using a local bracketing, i.e., of $\L_\delta$ for finite $\delta$, and Assumptions~\ref{asp:one}--\ref{asp:two}.  Such considerations allow us to remove the unnecessary logarithmic term on the rate presented in Theorem~1 of \citet{xu.mcid}.  

\begin{lemma}
\label{lem:bracket}
$\Nbrack(\eps, \L_\delta, L_1(P)) \lesssim \delta/\eps$.  
\end{lemma}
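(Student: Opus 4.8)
The plan is to exploit the simple, explicit form of the increments making up $\L_\delta$. First I would compute directly from \eqref{eq:loss_fcn} that
\[
\ell_\theta(x,y) - \ell_{\theta^\star}(x,y) = \tfrac{y}{2}\bigl\{\sign(x-\theta^\star) - \sign(x-\theta)\bigr\},
\]
which vanishes unless $x$ lies between $\theta^\star$ and $\theta$; explicitly, it equals $y\,\mathbf{1}[\theta^\star \le x < \theta]$ when $\theta \ge \theta^\star$ and $-y\,\mathbf{1}[\theta \le x < \theta^\star]$ when $\theta < \theta^\star$. The key structural fact this exposes is that every element of $\L_\delta$ is supported on an interval of length at most $\delta$ with one endpoint at $\theta^\star$, so its $L_1(P)$ size is governed by the $P$-probability that $X$ falls in that interval; this is what produces the factor $\delta$ in the bound.

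Since the families $\{\theta \ge \theta^\star\}$ and $\{\theta < \theta^\star\}$ are symmetric, I would treat only the former and double the count at the end. By Assumption~\ref{asp:one} the density $p$ is bounded above, say by $\bar p < \infty$, on a neighborhood of $\theta^\star$; for $\delta$ small enough that $[\theta^\star, \theta^\star+\delta]$ lies inside this neighborhood, I would lay down a grid $\theta^\star = u_0 < u_1 < \cdots < u_m = \theta^\star + \delta$ with mesh $u_{j+1} - u_j \le \eps/\bar p$, so that $m \le \bar p \delta / \eps + 1$ and $P(u_j \le X < u_{j+1}) \le \bar p(u_{j+1}-u_j) \le \eps$ for each $j$.

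For each cell I would construct one bracket covering all $\theta \in [u_j, u_{j+1})$. Because the increments carry the factor $y$, the bracket must be built separately on $\{y=1\}$ and $\{y=-1\}$: set
\[
u_{(j)}(x,y) = \mathbf{1}[y=1]\,\mathbf{1}[\theta^\star \le x < u_{j+1}] - \mathbf{1}[y=-1]\,\mathbf{1}[\theta^\star \le x < u_j],
\]
\[
l_{(j)}(x,y) = \mathbf{1}[y=1]\,\mathbf{1}[\theta^\star \le x < u_j] - \mathbf{1}[y=-1]\,\mathbf{1}[\theta^\star \le x < u_{j+1}].
\]
Checking the cases $y=1$ and $y=-1$ separately, and using monotonicity of the indicator in its upper endpoint, shows $l_{(j)} \le \ell_\theta - \ell_{\theta^\star} \le u_{(j)}$ pointwise for all such $\theta$, and a one-line computation gives $u_{(j)} - l_{(j)} = \mathbf{1}[u_j \le x < u_{j+1}]$ irrespective of $y$. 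Hence $P(u_{(j)} - l_{(j)}) = P(u_j \le X < u_{j+1}) \le \eps$, so these $m$ brackets form a valid $\eps$-bracketing of the $\theta \ge \theta^\star$ family. Adding the symmetric brackets for $\theta < \theta^\star$ gives $\Nbrack(\eps, \L_\delta, L_1(P)) \le 2m \lesssim \delta/\eps$.

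The individual computations are routine once the structure is identified; the step I expect to require the most care — the main obstacle — is the bracket construction above, since the $y$-dependence means one cannot simply bracket the indicator of an interval. One has to split on the sign of $y$, track which endpoint supplies the upper versus lower bound in each case, and then confirm that the bracket width collapses to the indicator of a single grid cell, so that its $L_1(P)$ size is exactly the $P$-probability that $X$ lands in that cell.
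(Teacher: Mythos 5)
Your proof is correct and follows essentially the same route as the paper's: both arguments exploit the fact that functions in $\L_\delta$ differ from zero only on an interval of length at most $\delta$ about $\theta^\star$, whose $P$-mass is $\lesssim\delta$ by Assumption~\ref{asp:one}, and then partition that interval into cells of $P$-mass at most $\eps$, yielding $\lesssim \delta/\eps$ brackets. The only difference is one of exposition: the paper simply counts cells by appeal to the standard Glivenko--Cantelli partition of \citet{xu.mcid}, whereas you construct the brackets explicitly (including the split on the sign of $y$), which makes the argument self-contained but is not a different idea.
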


\begin{proof}
For the standard Glivenko--Cantelli theorem partition, which is used in \citet{xu.mcid}, one needs to partition the interval $[0,1]$ into $k$ intervals of length less than $\eps$, so $k$ must be greater than $1/\eps$, but can be taken less than $2/\eps$.  By Assumption~\ref{asp:one}, we have that $\delta \lesssim P(|X-\theta^\star| < \delta) \lesssim \delta$.  For the local bracketing, this means we only need to partition an interval of length proportional to $\delta$ into intervals of length less than $\eps$.  Therefore, the total number of intervals is $\lesssim \delta/\eps$, as was to be shown.  
\end{proof}

From this and the fact that the brackets are pairs of indicator functions, we can get a bound on the $L_2(P)$ bracket number, i.e., $\Nbrack(\eps, \L_\delta, L_2(P)) \lesssim (\delta/\eps)^2$; see Example~19.6 in \citet{vaart1998}.  Then the bracketing integral is 
\begin{equation}
\label{eq:jbrack}
\Jbrack(\delta, \L_\delta, L_2(P)) := \int_0^\delta \{\log \Nbrack(\eps, \L_\delta, L_2(P)) \}^{1/2} \,d\eps \lesssim \delta. 
\end{equation}

Finally, we will need a maximal inequality for the empirical process $\GG_n(\ell_\theta-\theta_{\theta^\star})$ for $\theta$ near $\theta^\star$.  \citet{xu.mcid} show that $g(\theta) = I_{\{|\theta-\theta^\star| \leq \delta\}}$ is an envelop function for $\L_\delta$, with $\|g\|_{L_2(P)} \lesssim \delta^{1/2}$.  Then, given the bound \eqref{eq:jbrack} on the bracketing integral, the maximal inequality in Corollary~19.35 of \citet{vaart1998} gives the following.  

\begin{lemma}
\label{lem:maximal}
$P\{\sup_{|\theta-\theta^\star| < \delta} |\GG_n(\ell_\theta-\ell_{\theta^\star})|\} \lesssim \delta^{1/2}$. 
\end{lemma}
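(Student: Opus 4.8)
The plan is to read the bound off the bracketing-entropy maximal inequality, Corollary~19.35 of \citet{vaart1998}, which asserts that for a class $\mathcal{F}$ with envelope $F$ one has $P\{\sup_{f \in \mathcal{F}} |\GG_n f|\} \lesssim \Jbrack(\|F\|_{L_2(P)}, \mathcal{F}, L_2(P))$, where the bracketing integral is taken with integrand $\{1 + \log \Nbrack(\eps, \mathcal{F}, L_2(P))\}^{1/2}$. I would apply this with $\mathcal{F} = \L_\delta$ and with the indicator envelope $g$ exhibited just above, for which Assumption~\ref{asp:one} gives $\|g\|_{L_2(P)}^2 = P(|X - \theta^\star| \leq \delta) \lesssim \delta$, hence $\|g\|_{L_2(P)} \lesssim \delta^{1/2}$. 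The whole problem is then reduced to showing that $\Jbrack(\delta^{1/2}, \L_\delta, L_2(P)) \lesssim \delta^{1/2}$.

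For that bound I would use the $L_2(P)$ bracketing-number estimate $\Nbrack(\eps, \L_\delta, L_2(P)) \lesssim (\delta/\eps)^2$ recorded after Lemma~\ref{lem:bracket}, which is informative only while $\eps \lesssim \delta$; once $\eps$ exceeds a constant multiple of $\delta$ the class is covered by a single bracket, so $\log \Nbrack(\eps, \L_\delta, L_2(P)) = O(1)$ there. Splitting the integral at $\eps \asymp \delta$, the portion over $(0, \delta)$ contributes $\int_0^\delta \{1 + 2\log(\delta/\eps)\}^{1/2} \, d\eps \lesssim \delta$, as the substitution $\eps = \delta u$ turns it into $\delta \int_0^1 \{1 + 2\log(1/u)\}^{1/2} \, du$ with the latter integral a finite constant, while the portion over $(\delta, \delta^{1/2})$ has a bounded integrand and contributes at most $\delta^{1/2}$. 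Adding the two pieces gives $\Jbrack(\delta^{1/2}, \L_\delta, L_2(P)) \lesssim \delta + \delta^{1/2} \lesssim \delta^{1/2}$, which is exactly the claim.

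The one place demanding care is the upper limit of integration. The quantity computed in \eqref{eq:jbrack} integrates only out to $\delta$ and would therefore suggest the faster rate $\delta$; but the maximal inequality forces the integral out to $\|g\|_{L_2(P)} \asymp \delta^{1/2}$, and it is precisely the ``$+1$'' term integrated over the flat range $(\delta, \delta^{1/2})$ --- where there is essentially nothing left to bracket --- that supplies the dominant contribution of order $\delta^{1/2}$. Recognizing that this trivial tail, rather than the genuine entropy near $\eps = 0$, governs the rate is the crux; everything else is a routine substitution.
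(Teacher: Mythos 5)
You follow the same route as the paper---Corollary~19.35 of \citet{vaart1998} applied to $\L_\delta$ with the indicator envelope $g$---and your point about the upper limit of integration is correct and worth making explicit: the maximal inequality requires the bracketing integral out to $\|g\|_{L_2(P)} \asymp \delta^{1/2}$, not merely to $\delta$ as written in \eqref{eq:jbrack}, and that is exactly why the lemma's bound is $\delta^{1/2}$. The gap is in how you handle the range $\delta \lesssim \eps \lesssim \delta^{1/2}$. Your claim that once $\eps$ exceeds a constant multiple of $\delta$ the class is covered by a single bracket is false. For $\theta > \theta^\star$ one computes $\ell_\theta - \ell_{\theta^\star} = y\,I_{[\theta^\star,\theta)}(x)$ (and symmetrically for $\theta < \theta^\star$), so for almost every $(x,y)$ with $|x - \theta^\star| < \delta$ some member of $\L_\delta$ equals $+1$ there and some member equals $-1$ or $0$ there; the pointwise supremum and infimum of the class differ by exactly $g$ almost everywhere. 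Hence any \emph{single} bracket $[l,u]$ containing all of $\L_\delta$ has $u - l \geq g$ a.e., and therefore $L_2(P)$-size at least $\|g\|_{L_2(P)} \gtrsim \delta^{1/2}$, since Assumption~\ref{asp:one} bounds $p$ away from zero as well as infinity. A single bracket only becomes available at scale $\delta^{1/2}$, not $\delta$. Indeed, on your middle range the bracketing number is of order $\delta/\eps^2$---as large as $1/\delta$ near $\eps = \delta$, nowhere near $O(1)$---because for these indicator-type brackets $\|u-l\|_{L_2(P)}^2 = \|u-l\|_{L_1(P)}$, so the correct conversion of Lemma~\ref{lem:bracket} reads $\Nbrack(\eps,\L_\delta,L_2(P)) = \Nbrack(\eps^2,\L_\delta,L_1(P)) \lesssim \delta/\eps^2$; the paper's displayed ``$(\delta/\eps)^2$'' misplaces the exponent, which may be what misled you.

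The lemma itself survives, but the repair needs more than adjusting a constant: with the true entropy, your bound for the middle range (length of interval times supremum of the integrand) gives $\delta^{1/2}\{\log(1/\delta)\}^{1/2}$, which overshoots by a log factor. Instead, use $\Nbrack(\eps, \L_\delta, L_2(P)) \lesssim \delta/\eps^2$ on all of $(0, c\,\delta^{1/2})$, where $c\,\delta^{1/2} \asymp \|g\|_{L_2(P)}$, and integrate it exactly as you did on $(0,\delta)$: substituting $\eps = \delta^{1/2}u$,
\[
\int_0^{c\delta^{1/2}} \bigl\{\log_+(\delta/\eps^2)\bigr\}^{1/2}\,d\eps
= \delta^{1/2} \int_0^{c} \bigl\{2\log_+(1/u)\bigr\}^{1/2}\,du \lesssim \delta^{1/2},
\]
where $\log_+ x = \max(\log x, 0)$, because the logarithmic singularity at $u = 0$ is integrable. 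It is this integrability---not any vanishing of the entropy beyond $\eps \asymp \delta$---that yields $\Jbrack(\|g\|_{L_2(P)}, \L_\delta, L_2(P)) \lesssim \delta^{1/2}$ and hence, via Corollary~19.35, the lemma.
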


\subsection{Proofs from Section~\ref{SS:minimizer}}
\label{proofs:minimizer}

\begin{proof}[Proof of Theorem~\ref{thm:m.rate}]
Similar to the proof of Theorem~2 in \citet{wongshen1995} and of Theorem~5.52 in \citet{vaart1998}.  The M-estimator $\hat\theta_n$, the global minimizer of $R_n$, satisfies $R_n(\hat\theta_n) \leq R_n(\theta^\star) + \zeta_n$ for \emph{any} $\zeta_n$.  Let $K>0$ be as in Lemma~\ref{lem:num} in Appendix~\ref{proofs:rates}, and take $\zeta_n = Ks_n^{1+\gamma}$, where $s_n = a_nn^{-r}$, $r=(1+2\gamma)^{-1}$, and $a_n$ is any divergent sequence.  Then we have that 
\[ |\hat\theta_n - \theta^\star| > s_n \implies \sup_{|\theta-\theta^\star| > s_n} \{R_n(\theta^\star) - R_n(\theta) \} \geq -K s_n^{1+\gamma}. \]
By Lemma~\ref{lem:num}, the latter event has vanishing $P$-probability, which implies that $P(|\hat\theta-\theta^\star| > s_n) \to 0$.  Therefore, $\hat\theta_n - \theta^\star = o_P(s_n)$ or, since $a_n$ is arbitrary, $\hat\theta_n - \theta^\star = O_P(n^{-r})$.  
\end{proof}

\ifthenelse{1=1}{}{
\subsection{Proofs from Section~\ref{SS:consistency}}
\label{proofs:consistency}

\begin{proof}[Proof of Theorem~\ref{thm:consistency}]
The denominator $D_n$ of the posterior distribution can be bounded below by $e^{-nc}$ for any $c > 0$, with $P$-probability~1 for all large $n$.  The proof of this is exactly like that of Lemma~4.4.1 in \citet{ghoshramamoorthi}.  For the numerator, we make use of the uniform law of large numbers in Lemma~\ref{lem:ulln} in Appendix~\ref{SS:prelim}.  For given $\eps > 0$, let $A=\{\theta: |\theta-\theta^\star| > \eps\}$.  Then 
\begin{align*}
N_n(A) & = \int_A e^{-n[L_n(\theta) - L_n(\theta^\star)]} \pi(\theta) \,d\theta \\
& \leq e^{n^{1/2} \sup_\theta |\GG_n(\ell_\theta - \ell_{\theta^\star})|} \int_A e^{-n[L(\theta) - L(\theta^\star)]} \pi(\theta) \,d\theta. 
\end{align*}
By the uniform law of large numbers, the first term is bounded above by $e^{n^{1/2} b}$ for any $b > 0$, $P$-almost surely for all large $n$.  Also, because $L(\theta)-L(\theta^\star)$ is positive and bowl-shaped, for $\theta \in A$, there exists a constant $a > 0$ such that $L(\theta) - L(\theta^\star) \geq a$ for $\theta \in A$.  Therefore, 
\[ N_n(A) \leq e^{n^{1/2} b - na}, \quad \text{$P$-almost surely for all large $n$}. \]
Then there exists $d > 0$ such that $N_n(A) \leq e^{-nd}$ $P$-almost surely for all large $n$.  Now combine the bounds on $N_n(A)$ and $D_n$, with $c=d/2$, to complete the proof.  
\end{proof}

\begin{proof}[Proof of Corollary~\ref{crl:post.mean}]
By Jensen's inequality, $|\tilde\theta_n - \theta^\star| \leq \int |\theta - \theta^\star| \,\Pi_n(d\theta)$.  Take any $\eps > 0$ and partition $\RR$ as $\{\theta: |\theta-\theta^\star| \leq \eps\} \cup \{\theta: |\theta-\theta^\star| > \eps\}$ and write 
\[ \int |\theta - \theta^\star| \, \Pi_n(d\theta) \leq \eps + \int_{|\theta-\theta^\star| > \eps} |\theta-\theta^\star| \,\Pi_n(d\theta). \]
Just like in the proof of Theorem~\ref{thm:consistency}, the posterior measure away from $\theta^\star$ can be bounded by the prior measure times some $Z_n$ such that $Z_n \to 0$ $P$-almost surely, uniformly in $\theta$.  Therefore, the second term in the above display is bounded by $Z_n \cdot \int |\theta-\theta^\star| \, \Pi(d\theta)$, which itself vanishes $P$-almost surely since the prior mean exists.  Since $\eps > 0$ is arbitrary, the claimed consistency follows.  
\end{proof}
}

\subsection{Proofs from Section~\ref{SS:rates}}
\label{proofs:rates}

Here, it will be convenient to rewrite the posterior distribution in \eqref{eq:post} as 
\[ \Pi_n(A) = \frac{N_n(A)}{D_n} = \frac{\int_A e^{-n [R_n(\theta) - R_n(\theta^\star)]} \pi(\theta) \,d\theta}{\int_\RR e^{-n[R_n(\theta) - R_n(\theta^\star)]} \pi(\theta) \,d\theta}. \]
Then the goal is to obtain appropriate upper bounds on the numerator and lower bounds on the denominator.  For the latter, we have the following result, whose proof follows that of Lemma~1 in \citet{shen.wasserman.2001} almost exactly; the only difference is that boundedness of $\ell_\theta-\ell_{\theta^\star}$ can be used in place of a second Kullback--Leibler moment.    

\begin{lemma}
\label{lem:den}
For a vanishing sequence $t_n$, set $\Theta_n = \{\theta: R(\theta)-R(\theta^\star) \leq t_n\}$.  If $nt_n \to \infty$, then $D_n \gtrsim \Pi(\Theta_n) e^{-2nt_n}$ with $P$-probability converging to 1 as $n \to \infty$.
\end{lemma}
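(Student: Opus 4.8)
The plan is to follow the template of Lemma~1 in \citet{shen.wasserman.2001}: lower-bound the denominator $D_n = \int_\RR e^{-n[R_n(\theta)-R_n(\theta^\star)]}\pi(\theta)\,d\theta$ by discarding all prior mass outside $\Theta_n$. I would first split the exponent using the empirical process, writing $R_n(\theta)-R_n(\theta^\star) = [R(\theta)-R(\theta^\star)] + (\PP_n-P)(\ell_\theta-\ell_{\theta^\star})$. Restricting the integral to $\Theta_n$, where $R(\theta)-R(\theta^\star) \le t_n$, then yields
\[ D_n \ge e^{-nt_n}\,\Pi(\Theta_n)\int e^{-n(\PP_n-P)(\ell_\theta-\ell_{\theta^\star})}\,\tilde\Pi_n(d\theta), \]
where $\tilde\Pi_n(\cdot) = \Pi(\cdot\cap\Theta_n)/\Pi(\Theta_n)$ is the renormalized prior on $\Theta_n$ (we may assume $\Pi(\Theta_n)>0$, since the claimed bound is otherwise vacuous).

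Next I would apply Jensen's inequality to the convex map $u\mapsto e^{-u}$ against the probability measure $\tilde\Pi_n$, pulling the integral inside the exponent:
\[ \int e^{-n(\PP_n-P)(\ell_\theta-\ell_{\theta^\star})}\,\tilde\Pi_n(d\theta) \ge \exp\bigl\{-n(\PP_n-P)\bar h_n\bigr\}, \qquad \bar h_n := \int(\ell_\theta-\ell_{\theta^\star})\,\tilde\Pi_n(d\theta). \]
It then suffices to show that $(\PP_n-P)\bar h_n \le t_n$ with $P$-probability tending to one, since on that event $D_n \ge \Pi(\Theta_n)e^{-2nt_n}$, as claimed. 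Because $(\PP_n-P)\bar h_n$ is a centered average of i.i.d.\ terms, Chebyshev's inequality bounds the probability of the complementary event by $\mathrm{Var}_P(\bar h_n)/(n t_n^2)$.

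The crux is controlling $\mathrm{Var}_P(\bar h_n)$, and this is precisely the step where boundedness of $\ell_\theta-\ell_{\theta^\star}$ replaces the second Kullback--Leibler moment condition of \citet{shen.wasserman.2001}. By Jensen's inequality again, $\mathrm{Var}_P(\bar h_n) \le P\bar h_n^2 \le \int P(\ell_\theta-\ell_{\theta^\star})^2\,\tilde\Pi_n(d\theta)$; since $\ell_\theta,\ell_{\theta^\star}\in\{0,1\}$, the integrand equals $P\{\ell_\theta\neq\ell_{\theta^\star}\} = P(X\text{ between }\theta\text{ and }\theta^\star)$, which by Assumption~\ref{asp:one} is $\lesssim|\theta-\theta^\star|$. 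On $\Theta_n$, Lemma~\ref{lem:loss.diff.bnd} forces $|\theta-\theta^\star| \lesssim t_n^{1/(1+\gamma_1)}$, so $\mathrm{Var}_P(\bar h_n) \lesssim t_n^{1/(1+\gamma_1)}$ and the Chebyshev bound is of order $(n\, t_n^{2-1/(1+\gamma_1)})^{-1}$, which vanishes for the sequences $t_n$ used in the proof of Theorem~\ref{thm:post.rate}. I expect this variance estimate --- translating the restricted prior mass on $\Theta_n$ into an $L_2(P)$ size of order $t_n^{1/(1+\gamma_1)}$ via the loss-difference lower bound --- to be the main obstacle, since the crude bound $\mathrm{Var}_P(\bar h_n)\le 1$ would be too weak for the relevant $t_n$, whereas the factorization and the two Jensen steps are routine.
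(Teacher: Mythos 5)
Your factorization, the Jensen step, and the variance calculation are all correct, and you are following exactly the Shen--Wasserman template that the paper's one-line proof points to; the gap is in your closing claim, and it is fatal. Your Chebyshev bound is
\[
\frac{\mathrm{Var}_P(\bar h_n)}{n t_n^2} \;\lesssim\; \frac{t_n^{1/(1+\gamma_1)}}{n t_n^2} \;=\; \bigl(n\, t_n^{(1+2\gamma_1)/(1+\gamma_1)}\bigr)^{-1},
\]
which vanishes if and only if $n t_n^{(1+2\gamma_1)/(1+\gamma_1)} \to \infty$ --- a strictly stronger hypothesis than the lemma's stated $n t_n \to \infty$. Worse, it fails precisely for the sequences used in the proof of Theorem~\ref{thm:post.rate}: there $t_n = n^{-(1-\beta)}$ with $\beta < \gamma_1/(1+2\gamma_1)$, so $n t_n^{(1+2\gamma_1)/(1+\gamma_1)} = n^{1-(1-\beta)(1+2\gamma_1)/(1+\gamma_1)} \to 0$ and your bound \emph{diverges} rather than vanishes. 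Your argument needs $\beta > \gamma_1/(1+2\gamma_1)$, whereas the theorem's proof forces $\beta \le \gamma_1/(1+2\gamma_1)$ (the numerator exponent $a_n^{1+\gamma_1} n^{\gamma_1/(1+2\gamma_1)}$ must dominate $n^\beta$ for \emph{every} diverging $a_n$). So the weaker lemma you actually prove cannot serve its intended purpose.

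This cannot be repaired by a sharper tail inequality: taking $\gamma_1=\gamma_2=1$ and a flat prior, $\mathrm{Var}_P(\bar h_n) \asymp t_n^{1/2}$, so the standard deviation of $(\PP_n - P)\bar h_n$ is of order $t_n^{1/4}n^{-1/2} \gg t_n$ when $n t_n^{3/2}\to 0$, and by the CLT the event $(\PP_n - P)\bar h_n > t_n$ then has probability tending to $1/2$; i.e., the Jensen lower bound itself sits below $\Pi(\Theta_n)e^{-2nt_n}$ with non-vanishing probability. The lossy step is applying Jensen over \emph{all} of $\Theta_n$, since the renormalized prior spreads mass out to $|\theta - \theta^\star| \asymp t_n^{1/(1+\gamma_1)}$, where the empirical fluctuations swamp $n t_n$. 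A working route instead restricts the denominator integral to the much smaller set $\{\theta: |\theta-\theta^\star| \le t_n\} \subseteq \Theta_n$ (for large $n$): there the drift term is $n\sup_\theta\{R(\theta)-R(\theta^\star)\} \lesssim n t_n^{1+\gamma_2} = o(n t_n)$, the maximal inequality of Lemma~\ref{lem:maximal} gives $\sup_\theta n^{1/2}|\GG_n(\ell_\theta - \ell_{\theta^\star})| = O_P\bigl((n t_n)^{1/2}\bigr) = o_P(n t_n)$, hence $D_n \ge \Pi(\{|\theta-\theta^\star|\le t_n\})\, e^{-n t_n}$ with probability tending to one, and the prior-mass ratio $\Pi(\Theta_n)/\Pi(\{|\theta-\theta^\star|\le t_n\}) \lesssim t_n^{-\gamma_1/(1+\gamma_1)}$ is absorbed by the spare factor $e^{-n t_n}$ whenever $n t_n \gg \log(1/t_n)$. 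That last condition holds for the $t_n$ in Theorem~\ref{thm:post.rate} but is not implied by $n t_n \to \infty$ alone, which indicates that the difficulty you ran into is real: the paper's appeal to \citet{shen.wasserman.2001}, with boundedness substituting for the second Kullback--Leibler moment, glosses over exactly this point, because boundedness controls the variance only at the scale $|\theta-\theta^\star|$, not at the scale $R(\theta)-R(\theta^\star)$.
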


The next step is to bound the numerator $N_n(A_n)$, where $A_n$ is the complement of the shrinking neighborhood of $\theta^\star$ define in Theorem~\ref{thm:post.rate}.  Towards this, we have the following technical result which provides some uniform control on the empirical risk difference $R_n(\theta^\star)-R_n(\theta)$ for $\theta$ outside a neighborhood of $\theta^\star$.  The proof of this result relies on the bracketing entropy calculations and maximal inequality in Appendix~\ref{SS:prelim}.   

\begin{lemma}
\label{lem:num}
Under Assumptions~\ref{asp:one}--\ref{asp:two}, with $\gamma$ defined in \eqref{eq:gamma.new}, let $s_n = a_n n^{-r}$ where $r= (1+2\gamma)^{-1}$, and $a_n$ is any diverging sequence.  Then there exists $K > 0$ such that 
\[ P\Bigl(\sup_{|\theta-\theta^\star| > s_n} \{R_n(\theta^\star) - R_n(\theta) \} > -K s_n^{1+\gamma} \Bigr) \to 0, \quad \text{as $n \to \infty$}. \]
\end{lemma}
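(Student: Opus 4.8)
The plan is to reduce the claim to a statement about the centered empirical process $\GG_n(\ell_\theta - \ell_{\theta^\star}) = \sqrt{n}\{(R_n - R)(\theta) - (R_n - R)(\theta^\star)\}$ and then run a dyadic peeling argument. First I would write
\[
R_n(\theta^\star) - R_n(\theta) = -\{R(\theta) - R(\theta^\star)\} - n^{-1/2}\GG_n(\ell_\theta - \ell_{\theta^\star}),
\]
so that the event in the display---that the supremum exceeds $-K s_n^{1+\gamma_1}$---forces some $\theta$ with $|\theta-\theta^\star| > s_n$ to satisfy $R(\theta) - R(\theta^\star) + n^{-1/2}\GG_n(\ell_\theta - \ell_{\theta^\star}) < K s_n^{1+\gamma_1}$. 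In words, the empirical fluctuation must be large enough to cancel the deterministic separation of the risk, which is bounded below by Lemma~\ref{lem:loss.diff.bnd}.

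Next I would fix $\delta_0 > 0$ small enough that Lemma~\ref{lem:loss.diff.bnd} applies on $(0,\delta_0]$, and split $\{|\theta-\theta^\star|>s_n\}$ into a far part $\{|\theta-\theta^\star| > \delta_0\}$ and a near part $\{s_n < |\theta-\theta^\star| \leq \delta_0\}$. On the far part the separation $R(\theta)-R(\theta^\star)$ is bounded below by a positive constant $c_0$ (the ``well-separated'' property noted in Appendix~\ref{SS:prelim}), while $K s_n^{1+\gamma_1}\to 0$; hence the bad event there forces $\sup_\theta|(R_n-R)(\theta) - (R_n-R)(\theta^\star)| > c_0/2$ for all large $n$, whose probability vanishes because $\L_\infty$ is Glivenko--Cantelli. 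The near part I would peel into dyadic shells $S_j = \{2^{j-1}s_n < |\theta-\theta^\star| \leq 2^j s_n\}$, $j \geq 1$. On $S_j$, Lemma~\ref{lem:loss.diff.bnd} gives $R(\theta)-R(\theta^\star) \geq c\,(2^{j-1}s_n)^{1+\gamma_1}$, so for $K$ chosen small (for instance $K \leq c/2$) the bad event on $S_j$ implies
\[
\sup_{\theta \in S_j}\bigl|\GG_n(\ell_\theta-\ell_{\theta^\star})\bigr| \gtrsim \sqrt{n}\,2^{j(1+\gamma_1)}\,s_n^{1+\gamma_1}.
\]

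I would then bound each shell's probability by Markov's inequality together with the maximal inequality of Lemma~\ref{lem:maximal} (which controls the expected supremum over the ball of radius $2^j s_n$), giving a contribution of order $(2^j s_n)^{1/2}/\{\sqrt{n}\,2^{j(1+\gamma_1)}s_n^{1+\gamma_1}\} \propto 2^{-j(1/2+\gamma_1)}/\{\sqrt{n}\,s_n^{1/2+\gamma_1}\}$. Summing the geometric series over $j$ costs only a constant factor, so the total near-part probability is of order $\{\sqrt{n}\,s_n^{1/2+\gamma_1}\}^{-1}$. The final step is the exponent bookkeeping: with $s_n = a_n n^{-r}$ and $r = (1+2\gamma_1)^{-1}$, one checks
\[
\sqrt{n}\,s_n^{1/2+\gamma_1} = a_n^{1/2+\gamma_1}\,n^{\,1/2 - r(1/2+\gamma_1)} = a_n^{1/2+\gamma_1},
\]
since $r(1/2+\gamma_1) = 1/2$ exactly, so the near-part probability is $\lesssim a_n^{-(1/2+\gamma_1)} \to 0$. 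The main obstacle, and the reason $r = (1+2\gamma_1)^{-1}$ is the right exponent, is precisely this balance: the local bracketing calculation forces the empirical-process envelope to scale like $\delta^{1/2}$, while the risk separation scales like $\delta^{1+\gamma_1}$, and $r$ is chosen so that these two scales meet. Care is also needed to ensure the maximal inequality is invoked only on shells inside $[\theta^\star-\delta_0,\theta^\star+\delta_0]$, where the envelope bound of Appendix~\ref{SS:prelim} is valid, which is exactly why the far region is peeled off and handled by the cruder Glivenko--Cantelli argument.
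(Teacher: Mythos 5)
Your proof is correct and follows essentially the same route as the paper's: the same decomposition of $R_n(\theta^\star)-R_n(\theta)$ into the risk separation plus the empirical-process term $n^{-1/2}\GG_n(\ell_\theta-\ell_{\theta^\star})$, the same dyadic peeling with Lemma~\ref{lem:loss.diff.bnd} and the maximal inequality of Lemma~\ref{lem:maximal} applied shell by shell via Markov, and the same exponent bookkeeping $r(1/2+\gamma_1)=1/2$ giving a total bound of order $a_n^{-(1/2+\gamma_1)} \to 0$. Your one departure---splitting off the far region $\{|\theta-\theta^\star|>\delta_0\}$ and handling it by the Glivenko--Cantelli property so that Lemma~\ref{lem:loss.diff.bnd} and the local envelope bound are invoked only at scales where they are valid---is a refinement that the paper's proof silently skips (it runs the shells over all $m\geq 0$), so this extra care is a point in your favor rather than a gap.
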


\begin{proof}
Start with the identity 
\[ R_n(\theta^\star) - R_n(\theta) = \{ R(\theta^\star) - R(\theta) \} - n^{-1/2} \GG_n(\ell_\theta - \ell_{\theta^\star}),  \]
where $\GG_n f = n^{1/2}( \PP_n f - Pf)$ is the empirical process.  Next, since the supremum of a sum is no more than the sum of the suprema, we get 
\[ \sup_{|\theta-\theta^\star| > \eps} \{ R_n(\theta^\star) - R_n(\theta) \} \leq \sup_{|\theta-\theta^\star| > \eps} \{ R(\theta^\star) - R(\theta) \} + n^{-1/2} \sup_{|\theta-\theta^\star| > \eps} |\GG_n(\ell_\theta - \ell_{\theta^\star})|; \]
the second inequality comes from putting absolute value on the empirical process term.  From Lemma~\ref{lem:loss.diff.bnd}, we get 
\[ \sup_{|\theta-\theta^\star| > \eps} \{ R_n(\theta^\star) - R_n(\theta) \} \leq -C \eps^{1+\gamma} + n^{-1/2} \sup_{|\theta-\theta^\star| > \eps} |\GG_n(\ell_\theta - \ell_{\theta^\star})|. \]
Now, following the proof of Theorem~5.52 from \citet{vaart1998} or of Theorem~1 in \citet{wongshen1995}, introduce ``shells'' $\{\theta: 2^m \eps < |\theta-\theta^\star| \leq 2^{m+1} \eps\}$ for integers $m$.  On these shells, we can use both the bound in Lemma~\ref{lem:loss.diff.bnd} and the maximal inequality in Lemma~\ref{lem:maximal}.  That is, 
\begin{align*}
\sup_{|\theta-\theta^\star| > s_n} & \{R_n(\theta^\star) - R_n(\theta)\} > -K s_n^{1+\gamma} \\
& \implies \sup_{2^m s_n < |\theta-\theta^\star| \leq 2^{m+1} s_n} \{R_n(\theta^\star) - R_n(\theta)\} > -K s_n^{1+\gamma} \quad \exists \; m \geq 0 \\
& \implies n^{-1/2} \sup_{2^m s_n < |\theta-\theta^\star| <  2^{m+1} s_n} |\GG_n(\ell_\theta-\ell_{\theta^\star})| \geq C (2^m s_n)^{1+\gamma} - K s_n^{1+\gamma} \\
& \implies n^{-1/2} \sup_{|\theta-\theta^\star| \leq  2^{m+1} s_n} |\GG_n(\ell_\theta-\ell_{\theta^\star})| \geq C (2^m s_n)^{1+\gamma} - K s_n^{(1+\gamma)},
\end{align*}
If $K \leq C/2$, then $C (2^m s)^{(1+\gamma)} - K s^{1+\gamma} \geq C(2^m s)^{1+\gamma} / 2$ for all $m \geq 0$.   
\begin{align*}
P\Bigl( \sup_{|\theta-\theta^\star| > s_n} & \{R_n(\theta^\star) - R_n(\theta)\} > -K s_n^{1+\gamma} \Bigr) \\
& \leq \sum_{m \geq 0} P\Bigl(  n^{-1/2} \sup_{|\theta-\theta^\star| <  2^{m+1} s_n} |\GG_n(\ell_\theta-\ell_{\theta^\star})| \geq C (2^m s_n)^{1+\gamma} / 2 \Bigr)
\end{align*}
To the summands, apply Markov's inequality and Lemma~\ref{lem:maximal} to get 
\[ P\Bigl(  n^{-1/2} \sup_{|\theta-\theta^\star| <  2^{m+1} s_n} |\GG_n(\ell_\theta-\ell_{\theta^\star})| \geq C (2^m s_n)^{1+\gamma} / 2 \Bigr) \leq \frac{C' (2^{m+1} s_n)^{1 / 2}}{n^{1/2} (2^m s_n)^{1+\gamma}}. \]
For $s_n = a_n n^{-1/(1+2\gamma)}$ where $a_n \uparrow \infty$, the upper bound satisfies
\[ \lesssim 2^{1/2}2^{-m(1/2+\gamma)}a_n^{-1/2-\gamma}. \]
Since $\gamma > 0$, the sum over $m \geq 0$ converges, so
\[ P\Bigl( \sup_{|\theta-\theta^\star| > s_n} \{R_n(\theta^\star) - R_n(\theta)\} > -K s_n^{1+\gamma} \Bigr) \lesssim a_n^{-1/2-\gamma}. \]
Then the upper bound vanishes since $a_n \uparrow \infty$ and $-1/2-\gamma<0$, completing the proof.  
\end{proof}

\begin{proof}[Proof of Theorem \ref{thm:post.rate}]
From Lemma~\ref{lem:num} we get an exponential bound on the numerator $N_n(A_n)$, i.e., $N_n(A_n) \leq \exp\{-K n s_n^{1+\gamma}\}$ with $P$-probability approaching 1.  For the denominator $D_n$, for a suitable sequence $t_n$, we have $D_n \gtrsim \Pi(\Theta_n) e^{-2n t_n}$, where $\Theta_n = \{\theta: R(\theta) - R(\theta^\star) \leq t_n\}$.  We claim that 
\[ \Theta_n \supseteq \{\theta: |\theta-\theta^\star| \leq t_n\}. \]
To see this, first note that if $\theta$ is close to $\theta^\star$, then by an argument similar to that in the proof of Lemma~\ref{lem:loss.diff.bnd} and using the boundedness of $\eta$,
\[ R(\theta) - R(\theta^\star) \lesssim |\theta-\theta^\star|^{\gamma_2} \int_{\theta^\star}^\theta p(x) \,dx. \]
The remaining term in the upper bound is the marginal $P$-probability assigned to the small interval around $\theta^\star$ which, by Assumption~\ref{asp:one}, can be bounded by a constant times $|\theta-\theta^\star|$.  Therefore, $R(\theta)-R(\theta^\star) \lesssim |\theta-\theta^\star|$ so, if $|\theta-\theta^\star| \lesssim t_n$, then $R(\theta)-R(\theta^\star) \leq t_n$.  Under Assumption~\ref{asp:three}, we can bound $\Pi(\Theta_n) \gtrsim t_n$.  So, if we take $t_n = n^{-(1-\beta)}$ for some $\beta > 0$ to be identified and some constant $H>0$, then by Lemma~\ref{lem:den} we get 
\[ D_n \gtrsim e^{-H n^\beta}, \quad \text{with $P$-probability approaching 1}. \]
Putting together the bounds on the numerator and denominator we get that, for some constant $M$, 
\[ \frac{N_n(A_n)}{D_n} \lesssim \exp\Bigl\{-M \Bigl(a_n^{1+\gamma} n^{\frac{\gamma_1}{1+2\gamma}} - n^\beta \Bigr) \Bigr\}. \]
We can take $\beta < \frac{\gamma}{1+2\gamma}$, and the upper bound vanishes.  
\end{proof}

\begin{proof}[Proof of Corollary~\ref{crl:post.mean.rate}]
Set $s_n = a_n n^{-r}$ for $a_n$ an arbitrary divergent sequence.  Next, define $\tilde s_n = \tilde a_n n^{-r(\gamma)}$, where $\tilde a_n$ is such that $\tilde a_n / a_n \to 0$, e.g., $\tilde a_n = \log a_n$.  Now partition $\RR$ as $\{\theta: |\theta-\theta^\star| \leq \tilde s_n\} \cup \{\theta: |\theta-\theta^\star| > \tilde s_n\}$, and write 
\begin{equation}
\label{eq:post.mean.rate.bound}
|\tilde\theta_n - \theta^\star| \leq \int |\theta - \theta^\star| \, \Pi_n(d\theta) \leq \tilde s_n + \int_{|\theta-\theta^\star| > \tilde s_n} |\theta-\theta^\star| \,\Pi_n(d\theta), 
\end{equation}
where the first inequality is by Jensen.  From the proof of Theorem~\ref{thm:post.rate}, the posterior away from $\theta^\star$ is bounded by the prior times some $Z_n = o_P(1)$, uniformly in $\theta$.  That is, 
\[ \int_{|\theta-\theta^\star| > \tilde s_n} |\theta-\theta^\star| \, \Pi_n(d\theta) \leq Z_n \int |\theta-\theta^\star| \, \Pi(d\theta). \]
In fact, we can bound $Z_n$ more precisely:
\[ Z_n \lesssim \exp\Bigl\{-M \Bigl(\tilde{a}_n^{1+\gamma} n^{\frac{\gamma}{1+2\gamma}} - n^\beta \Bigr) \Bigr\}, \quad \text{sufficiently small $\beta > 0$}. \] 
Dividing through \eqref{eq:post.mean.rate.bound} by $s_n$ we get that $s_n^{-1} |\tilde\theta_n - \theta^\star|$ is bounded by a constant times 
\[ \tilde a_n / a_n + e^{-\zeta_n} \int |\theta-\theta^\star| \, \Pi(d\theta), \]
where $\zeta_n = M \tilde{a}_n^{1+\gamma} n^{\frac{\gamma}{1+2\gamma}} - Mn^\beta +\log a_n  - r \log n$.  The first term in the upper bound goes to zero by the choice of $\tilde a_n$.  The second term goes to zero provided that the prior mean exists and $\zeta_n \to \infty$ as $n \to \infty$.  We assumed the former condition, and the latter can be easily arranged by choosing $\beta$ sufficiently small, so $\tilde\theta_n - \theta^\star = o_P(s_n)$.
\end{proof}

\bibliographystyle{apalike}
\bibliography{mybib}

\end{document}